\newif\ifonecolumn
\newtheorem{example}{Example}
\newtheorem{proposition}{Proposition}
\newtheorem{theorem}{Theorem}
\DeclareMathOperator*{\erfc}{erfc}
\newcommand\T{\rule{0pt}{2.5ex}}
\newcommand\B{\rule[-1.1ex]{0pt}{0pt}}
\begin{document}

\title{Near-Field Passive RFID Communication: Channel Model and Code Design}%
\author{\'{A}ngela~I.~Barbero, Eirik~Rosnes,~\IEEEmembership{Senior~Member,~IEEE}, 
Guang~Yang,  and {\O}yvind~Ytrehus,~\IEEEmembership{Senior~Member,~IEEE}%
\thanks{This work was supported by the Research Council of Norway through the ICC:RASC project, by the Spanish Ministerio de Ciencia e Innovaci\'{o}n through project MTM2010-21580-CO2-02, and by Simula@UiB. This work was presented in part at the 2011 Information Theory and Applications (ITA) workshop, San Diego, CA, Feb.\ 2011, and in part at the 3rd International Castle Meeting on Coding Theory and Applications (3ICMTA), Castell de Cardona, Cardona, Spain, Sep.\ 2011. }%
\thanks{\'{A}.\ I.\ Barbero is with the Departamento de Matem\'{a}tica Aplicada, Universidad de Valladolid, 47011 Valladolid, Spain. E-mail: angbar@wmatem.eis.uva.es.}%
\thanks{E.\ Rosnes is with the Selmer Center, Department of Informatics, University of Bergen, N-5020 Bergen, Norway, and the Simula Research Lab. E-mail: eirik@ii.uib.no.}%
\thanks{G.\ Yang was with the Selmer Center, Department of Informatics, University of Bergen, N-5020 Bergen, Norway. She is now with the Norwegian Social Science Data Services (NSD). E-mail: guang.yang@nsd.uib.no.}%
\thanks{{\O}.\ Ytrehus is with the Selmer Center, Department of Informatics, University of Bergen, N-5020 Bergen, Norway, and the Simula Research Lab. E-mail: oyvind@ii.uib.no.}}%
%


\maketitle

\begin{abstract}
This paper discusses a new channel model and code design  for the \emph{reader-to-tag} channel in near-field passive radio frequency identification (RFID) systems using inductive coupling as a power transfer mechanism.
%
If the receiver resynchronizes its internal clock each time a bit is detected, the bit-shift channel used previously in the literature to model the reader-to-tag channel needs to be modified. In particular, we propose a discretized Gaussian shift channel as a new channel model in this scenario. We introduce the concept of quantifiable \emph{error avoidance}, which is much simpler than error correction. The capacity is computed numerically, and we also design some new simple codes for error avoidance on this channel model based on insights gained from the capacity calculations.  Finally, some simulation results are presented to compare the proposed codes to the Manchester code and two previously proposed codes for the bit-shift channel model.
%
%
%
%
\end{abstract}

\begin{keywords}
Bit-shift channel, channel capacity, code design, coding for error avoidance, constrained coding, discretized Gaussian shift channel, inductive coupling, radio frequency identification (RFID), reader-to-tag channel, synchronization errors.
\end{keywords}

\section{Introduction}
Inductive coupling is a technique by which energy from one circuit is transferred to another without wires. Simultaneously, the energy transfer can be used as a vehicle for information transmission. This is a fundamental technology for near-field passive radio frequency identification (RFID) applications as well as lightweight sensor applications.

In the passive RFID application, a \emph{reader}, containing or attached to a power source, controls and powers a communication session with a \emph{tag}; a device without a separate power source. The purpose of the communication session may be, for examples, object identification, access control, or acquisition of sensor data.

Several standards exist that specify lower layer coding for RFID protocols. However, it seems that most standards employ codes that have been shown to be useful in general-purpose communication settings. Although this is justifiable from a pragmatic point of view, we observe that a thorough \emph{information-theoretic approach} may reveal alternate coding schemes that, in general, can provide benefits in terms of reliability, efficiency, synchronization, simplicity, or security.

Operating range of a reader-tag pair is determined by communications requirements as well as power transfer requirements.  To meet the communications requirements, the reader-to-tag and the tag-to-reader communication channels satisfy specified demands on communication transfer rate and reliability. To meet the power transfer requirements, the received power at the tag must be sufficiently large as to provide operating power at the tag.

According to \cite{nik06b,RFID2010}, with current technology it is the power transfer requirements that present the bottleneck with respect to operating range for a two-way reader-tag communication session. Nevertheless, there is a value in determining the information-theoretic aspects, such as tradeoffs between reliability and transmission rate, of this communication: First, because future technologies may shift the relation between communication and power transfer requirements, and second, because present cheap tag technologies impose challenges on communication which are not directly related merely to received signal power.

Wireless information and power transfer has been considered in different contexts previously, for instance, for multiuser orthogonal frequency division multiplexing  systems \cite{zhou13} and cellular networks \cite{hua12}. See also \cite{liu13} and references therein.
In \cite{gro10}, wireless information and power transfer across a noisy inductively coupled channel  was considered from a different perspective than we do in this paper, i.e., it was not considered from the perspective of  code design, but from a \emph{circuit} perspective. For details, we refer the interested reader to \cite{gro10}.  In \cite{var12}, a coding-based secure communication protocol for inductively coupled communication, inspired by quantum key distribution, was recently proposed.



In this paper, however, we address issues related to lower layer coding of information on inductively coupled channels, with emphasis on \emph{coding for error control}  for the \emph{reader-to-tag} channel. The remainder of the paper is organized as follows. In Section~\ref{sec:model}, we describe the characteristics  of the reader-to-tag channel and discuss power issues and processing capabilities.   
A  discretized Gaussian shift channel as a model for the reader-to-tag channel for passive near-field RFID is proposed in Section~\ref{sec:channelmodels}. This model is relevant if the receiver resynchronizes its internal clock each time a bit is detected, and is different from the recently proposed bit-shift channel model in \cite{ros11,ros09glo}.
In Section~\ref{sec:capacity}, we numerically consider its capacity, and, in Section~\ref{sec:coding}, we present several new and very simple codes for this channel model, as well as their encoding/decoding techniques. Simulation results are presented in Section~\ref{sec:sim}, and we draw some conclusions 
 in Section~\ref{sec:conclusion}.


\section{Characteristics of the Reader-to-Tag Channel}
\label{sec:model}

In this paper, we will be concerned with data transfer from a reader to a tag. An \emph{information source} generates an information \emph{frame} of $k$ bits ${\mathbf u} = (u_1,\ldots,u_k)$. The information frame is passed through an encoder to produce an encoded frame ${\mathbf c} = (c_1,\ldots,c_n)$. The encoded frame is interpreted as a waveform that modulates a carrier wave, as shown in  Fig.~\ref{fig:circuit}, \cite{gregory,nik07}.

Please observe that the concept of a frame in this context refers to a collection of bits that belong together, for some semantic reason related to the  application layer. The actual encoder may work at a different length. Due to the strictly limited computing power of the tag, the actual encoder may work on a bit-by-bit basis, as in most of the examples later in this paper. The encoded frame length $n$ may be fixed, depending only on $k$, or variable, depending on $k$ and also on the information frame, but in general $ n \geq k$.

\begin{figure}[htbp]
\centerline{\includegraphics[width=9cm,keepaspectratio=true,angle=0]{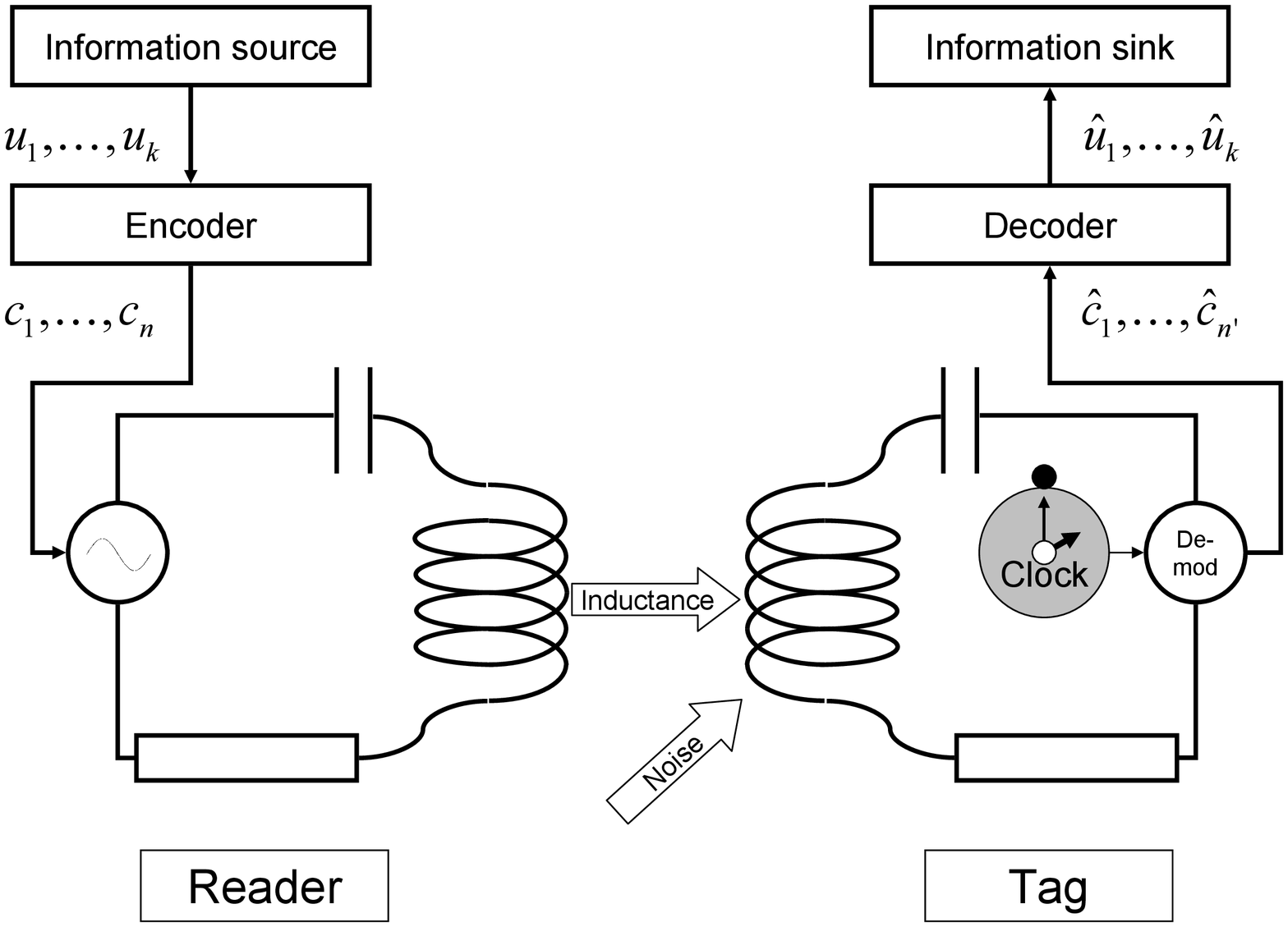}}
\caption{A simplified view of the reader-to-tag channel.}\label{fig:circuit}
\end{figure}

Meanwhile, back at  Fig.~\ref{fig:circuit}, the demodulator in the tag samples  the physical waveform at time intervals determined by the tag's timing device, and converts it into an estimate ${\hat{\mathbf{c}}} = (\hat{c}_1,\ldots,\hat{c}_{n'})$ of the transmitted frame, where in general $n' \neq n$. Ideally, ${\hat{\mathbf{c}}}$ should be identical to ${\mathbf c}$, but additive noise, interference, timing inaccuracies, and waveform degradation due to limited bandwidth may contribute to corrupt the received frame ${\hat{\mathbf{c}}}$. We will discuss some of these signal corruptions later in this paper. A decoder at the tag  subsequently attempts to recover an information frame ${\hat{\mathbf{u}}} = (\hat{u}_1,\ldots,\hat{u}_k)$ from ${\hat{\mathbf{c}}}$. Correct decoding is achieved if  ${\hat{\mathbf{u}} = {\mathbf{u}}}$.

\subsection{Power Issues}
\label{sec:power}
The tag in Fig.~\ref{fig:circuit} has no internal power source. Rather, it collects the power derived from the carrier. After some initial transient delay, the tag's power circuitry will be charged sufficiently to provide operating power for the tag. Commonly, amplitude modulation, or more precisely \emph{on-off keying (OOK)} is employed. In OOK, a "$1$" (resp.\ "$0$") is transmitted by the presence (resp.\ absence, or alternatively a low amplitude) of the carrier for the duration specified for transmitting that particular bit.

The transmitted power is limited by regulation \cite{RFID2010}. However, the amount of transferred power can still be influenced by the encoding scheme used. Although the tag has no traditional battery or other means of accumulating energy over an extended period, it is possible to "ephemerally" store energy over a short time (say, a few bit periods) in the power circuitry.
Thus, it makes sense to impose constraints on power content in the transmitted signal  \cite{ros11,ros09glo,bar08}, for example, by demanding that $m_P$ out of every $n_P$ consecutive transmitted bits are $1$'s. Thus, a high power content (i.e., the ratio $m_P/n_P$ is large) is an advantage. The precise manifestation of this advantage depends on technology and is difficult to measure. Therefore, we will consider different measures of power (to be defined below) as a figure of merit for a given coding scheme.

Formally, we will define the  \emph{power content} of a binary vector $\mathbf{a} \in {\rm GF}(2)^n$, denoted by $P(\mathbf{a})$, as the rational number $w(\mathbf{a})/n$, where $w(\cdot)$ denotes the Hamming weight of its binary argument.

Let $\mathcal{C}$ denote a block code or a variable-length code, i.e., a collection or set of codewords. Furthermore, let $\mathcal{C}^{[N]}$ be the set of sequences of length $N \geq 1$ over $\mathcal{C}$, i.e., the set of $N$ consecutive codewords. The \emph{average power} of $\mathcal{C}$ is defined as the average power content of the sequences in $\mathcal{C}^{[N]}$ as $N \to \infty$. For block codes, this average does not depend on $N$, and the average power of a block code $\mathcal{C}$ is $P_{\rm avg}(\mathcal{C})= \frac{1}{|\mathcal{C}|} \sum_{\mathbf{a} \in \mathcal{C}} P(\mathbf{a})$. However, for variable-length codes, the average depends on $N$, and we need to consider the limit as $N \to \infty$. In general, the average power of a code $\mathcal{C}$ can be computed from \cite{ros11}
\begin{displaymath}
P_{\rm avg}(\mathcal{C}) = \frac{\sum_{j=1}^{|\mathcal{C}|} w_j}{\sum_{j=1}^{|\mathcal{C}|} n_j}
\end{displaymath}
where $w_j$ and $n_j$ denote the Hamming weight and length of the $j$th codeword in $\mathcal{C}$, respectively.

The \emph{minimum sustainable power} of a block or variable-length code $\mathcal{C}$ is defined as $P_{\rm min}(\mathcal{C})=\min_{\mathbf{a} \in \mathcal{C}} P(\mathbf{a})$. 
We remark that for codes defined by a state diagram, the various notions of power can refer to any cycle in the state diagram. Thus, $P_{\rm min}$  refers to the minimum  average cycle weight of a cycle in the state diagram. 

As a final figure of merit, we will consider the \emph{local minimum power} of a code $\mathcal{C}$ as the minimum positive value of the ratio $m_P/n_P$ over all possible sequences in $\mathcal{C}^{[N]}$, for any finite value of $N$, where $n_P \geq m_P$ are arbitrary positive integers.


\subsection{Processing Capability}

Due to the limited tag power, processing capability is severely limited in a tag. This applies to any processing involved in whatever service the tag is supposed to provide, but also signal processing involved in receiving information.

\subsubsection{Error Avoidance Versus Error Correction}

For many communication channels studied in the literature, approaching channel capacity (or even achieving a significant coding gain over naive implementations) relies on error correction codes. However, although classes of codes are known for which the decoder can be efficiently implemented, the decoding process may still require a significant amount of processing. We will argue below that for channels for which the error probabilities depend on the transmitted data, reliability can be increased by using a code enforcing an appropriate set of modulation constraints. Such \emph{error avoiding} codes can typically be decoded by a simple table, mapping received sequences into information estimates.

\section{The Discretized Gaussian Shift Channel}
\label{sec:channelmodels}


In this section, we will discuss a new channel model for the reader-to-tag channel, which is slightly different from the bit-shift model (for inductive coupling) recently introduced in \cite{ros11,ros09glo}. 

If the receiver resynchronizes its internal clock each time a bit is detected, the bit-shift model from \cite{ros11,ros09glo} needs to be modified. We will first introduce the \emph{Gaussian shift channel}.

Suppose the reader transmits a run of $\tilde{x}$ consecutive equal symbols. This corresponds to an amplitude modulated signal of duration $\tilde{x}$. At the tag, we will assume that this is detected (according to the tag's internal clock) as having duration $\tilde{y}$, where
\begin{equation}
\label{eq:runlFormat1}
\tilde{y} = \tilde{x} \cdot K
\end{equation}
and $K$ is a random variable. In this paper,  $K$  follows a Gaussian distribution  $\mathbf{N}(\nu, \varepsilon^2)$ with mean $\nu$ and variance $\varepsilon^2$. Consecutive samplings of $K$ are assumed to be independent. If $\nu \neq 1$, it means that the tag has a systematic drift, which may affect the tag's ability to function at all. Thus, we will focus on the case $\nu = 1$. With this assumption, the input to the demodulator will be a sequence of alternating runs of high and low amplitude values; the detected duration ${\tilde{y}}$ of each run being a \emph{real-valued} number.


We might attempt decoding directly at the Gaussian shift channel, but the computational complexity may be high for the tag receiver. As a simplification, and to deal with the fact that $\tilde{y}$ may become negative ($K$ has a normal distribution), which of course does not have any physical interpretation,  we propose to discretize the timing and truncate $K$. The optimal choice for the quantization thresholds, i.e., the thresholds when mapping the real-valued numbers ${\tilde{y}}$ to \emph{positive} integers, will depend on the code under consideration.  However, an optimal \emph{local} threshold can be computed as shown in the following proposition.
\begin{proposition} \label{prop:1}
Let $a$ and $b$ be  positive integers with $b > a$, representing the only two legal runlengths in a given constrained code. Then, there is a single threshold $t = t(a,b)$, and its optimum value from a \emph{local} perspective\footnote{We can do better with a maximum-likelihood (ML) detector which considers the whole transmitted sequence.}  to determine if runlength $a$ or runlength $b$ was transmitted is
\begin{displaymath}
t = t(a,b) = \frac{2ab}{a+b}.
\end{displaymath}
The corresponding decision error with one such decision is
\begin{equation} \label{eq:decision_error}
Q\left(\frac{t-a}{a \varepsilon}\right) = Q \left( \frac{b-a}{(a+b) \varepsilon} \right) > Q \left(\frac{1}{\varepsilon} \right)
\end{equation}
where $Q(x)$ is the probability that a sample of the standard normal distribution has a value larger than
$x$ standard deviations above the mean, i.e.,
\begin{equation} \notag 
    Q(x) = \int_x^{\infty}\frac{1}{\sqrt{2\pi}}e^{-y^2/2}dy = \frac{1}{2}\erfc\left(\frac{x}{\sqrt{2}}\right)
\end{equation}
where $\textrm{erfc}(\cdot)$ denotes the complementary error function.
\end{proposition}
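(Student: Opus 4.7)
My plan is to formulate the decision as a binary hypothesis test between the two possible runlengths. Under the channel model (\ref{eq:runlFormat1}) with $\nu=1$, conditioning on the transmitted runlength $a$ we have $\tilde y \sim \mathbf{N}(a, a^2\varepsilon^2)$, and conditioning on $b$ we have $\tilde y \sim \mathbf{N}(b, b^2\varepsilon^2)$. A single-threshold decision rule declares $a$ if $\tilde y < t$ and $b$ otherwise, which is the natural structure because $a < b$ and the two conditional means lie on opposite sides of any $t \in (a,b)$.

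Next I would compute the two conditional decision errors. Standardizing the two Gaussians gives $P_a(t) := \Pr(\tilde y > t \mid a) = Q\!\left(\frac{t-a}{a\varepsilon}\right)$ and $P_b(t) := \Pr(\tilde y < t \mid b) = Q\!\left(\frac{b-t}{b\varepsilon}\right)$. Since $Q$ is strictly decreasing, $P_a$ is strictly decreasing in $t$ on $(a,b)$ while $P_b$ is strictly increasing. The phrase ``optimum from a local perspective'' I interpret in the minimax sense: in the absence of code-induced priors, the best single threshold equalizes the two conditional error probabilities, so neither hypothesis is systematically disadvantaged. Because $Q$ is strictly monotone, the balancing condition $P_a(t)=P_b(t)$ reduces to the linear equation
\begin{equation*}
\frac{t-a}{a\varepsilon} \;=\; \frac{b-t}{b\varepsilon},
\end{equation*}
i.e., $b(t-a) = a(b-t)$, whose unique solution is $t = 2ab/(a+b)$.

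Substituting this value back into either $P_a(t)$ or $P_b(t)$ and simplifying yields the common error $Q\!\left(\frac{b-a}{(a+b)\varepsilon}\right)$, which is the equality asserted in (\ref{eq:decision_error}). The final inequality then follows from elementary bounds: since $a$ and $b$ are positive integers with $b>a$, we have $0 < b-a < a+b$, so $\frac{b-a}{(a+b)\varepsilon} < \frac{1}{\varepsilon}$, and the strict monotone decrease of $Q$ gives $Q\!\left(\frac{b-a}{(a+b)\varepsilon}\right) > Q\!\left(\frac{1}{\varepsilon}\right)$.

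The argument is essentially elementary once the hypothesis-testing setup is in place; no real obstacle arises in the algebra. The only conceptual point worth emphasizing is the meaning of ``local optimum'': a genuine maximum-likelihood test between two Gaussians with different variances would have a quadratic log-likelihood ratio and hence in principle two thresholds, whereas the single-threshold minimax rule avoids this complication and gives the clean closed form $2ab/(a+b)$. This is also why the footnote of the proposition notes that an ML detector acting on the whole sequence could do better.
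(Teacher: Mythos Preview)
Your proof is correct and follows essentially the same route as the paper: compute the two conditional error probabilities $Q((t-a)/a\varepsilon)$ and $Q((b-t)/b\varepsilon)$, observe their opposite monotonicity in $t$, equate them to find $t=2ab/(a+b)$, and then substitute back and use $(b-a)/(a+b)<1$ for the inequality. Your explicit framing of the optimality criterion as minimax (equalizing the two conditional errors) and your closing remark about the true likelihood-ratio test being quadratic are nice clarifications that the paper leaves implicit.
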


\begin{IEEEproof}
Assuming $a$ is transmitted, then the probability that $b$ is received (with $t$ as the quantization threshold) is $Q((t-a)/ a \varepsilon)$. Likewise, if $b$ is transmitted, then the probability that $a$ is received is $Q((b-t)/ b \varepsilon)$. This follows directly from the fact that $K$ has a Gaussian distribution with mean $1$ and variance $\varepsilon^2$. Now, since $Q(\cdot)$ is a monotonically decreasing function, $Q((t-a)/ a \varepsilon)$ is monotonically decreasing and  $Q((b-t)/ b \varepsilon)$ is  monotonically increasing in $t$ (within the range $[a,b]$). Thus, the optimal threshold $t$ corresponds to the intersection of $Q((t-a)/ a \varepsilon)$ and $Q((b-t)/ b \varepsilon)$. Thus, 
$(t-a)/a \varepsilon = (b-t)/b \varepsilon$. Solving this equation, we get $t = 2ab/(a+b)$.
The expression for the decision error in (\ref{eq:decision_error}) follows by substituting the expression for the optimal threshold $t$ into either $Q((t-a)/ a \varepsilon)$ or $Q((b-t)/ b \varepsilon)$, and the final inequality (in (\ref{eq:decision_error})) follows from the fact that $(b-a)/(a+b)$ is smaller than  $1$.
\end{IEEEproof}
%
%
%

Note that when $a=b-1$, $t=2ab/(a+b)=2b(b-1)/(2b-1)$ will approach $(a+b)/2=b-1/2$ as $b$ goes to infinity.

 We remark that we do not allow the mapping of a real-valued number (from the output of the Gaussian shift channel) to zero (or a negative integer), which means that the channel can not make a runlength disappear. This appears to be consistent with properties of practical inductively coupled channels.

In general, let $\mathcal{Q}(\mathcal{A},\mathcal{T})$ denote a quantization scheme with quantization values  $\mathcal{A}=\{a_1,\ldots,a_{|\mathcal{A}|}\}$, where $1 \leq a_1 < \cdots < a_{|\mathcal{A}|} \leq L$, and $L$ is some positive integer (that later will be used as a \emph{runlength}), and quantization thresholds $\mathcal{T}=\{t_2,\ldots,t_{|\mathcal{A}|}\}$, where $a_{l} < t_{l+1} < a_{l+1}$, $l=1,\ldots,|\mathcal{A}|-1$. The quantization scheme works in the following way. Map a received real-valued number to an integer in $\mathcal{A}$ using quantization thresholds in $\mathcal{T}$, i.e., if the received real-valued number is in the range $[ t_l,t_{l+1})$, $l=2,\ldots,|\mathcal{A}|-1$, map it to $a_{l}$, if it is in the range $[t_{|\mathcal{A}|},\infty)$, map it to $a_{|\mathcal{A}|}$, and, otherwise, map it to $a_1$.

Now, we define the discretized Gaussian shift channel with quantization scheme $\mathcal{Q}(\mathcal{A},\mathcal{T})$ as the cascade of the Gaussian shift channel and the quantization scheme $\mathcal{Q}(\mathcal{A},\mathcal{T})$, where the quantization scheme $\mathcal{Q}(\mathcal{A},\mathcal{T})$ is applied to the real-valued sequence at the output of the Gaussian shift channel.

As an example, we can define a discretized Gaussian shift channel, where the quantization thresholds are chosen such that the integer sequence is obtained from the real-valued sequence by rounding its values to the nearest positive integer value. This particular quantization scheme will be denoted by $\mathcal{Q}_{\rm rounding}$. As a further modification, we may introduce a parameter $\Gamma$, to truncate the maximum observed length, into the quantization scheme  $\mathcal{Q}_{\rm rounding}$, and in this way get a family of discretized Gaussian shift channels. The resulting quantization scheme works in the following way.
%
%
If the reader has transmitted a run of $L$ symbols, the tag will detect it as having length
\begin{displaymath}
\begin{cases}
L-l, & \text{if $K-1 \in \left[-\frac{2l+1}{2L},-\frac{2l-1}{2L}\right)$ and} \\
& l=1,\ldots,\Gamma'-1 \\
L-\Gamma', & \text{if $K-1 \in \left(-\infty,-\frac{2\Gamma'-1}{2L} \right)$} \\
L, & \text{if $K-1 \in \left[-\frac{1}{2L},\frac{1}{2L} \right)$}\\
L+l, & \text{if $K-1 \in \left[\frac{2l-1}{2L},\frac{2l+1}{2L} \right)$ and} \\
& l=1,\ldots,\Gamma-1 \\
L+\Gamma, & \text{if $K-1 \in \left[\frac{2\Gamma-1}{2L},\infty \right)$}
\end{cases}
\end{displaymath}
where $\Gamma \geq 1$ is a truncation integer parameter and $\Gamma'=\min(\Gamma,L-1)$.
%
With $\Gamma=1$, we denote the channel as the discretized Gaussian single-shift channel. 
With $\Gamma=2$, the channel is called the discretized Gaussian double-shift channel, and so on.  
%
%
%
%
%
%
Now, if we want to express the  discretized Gaussian single-shift channel 
in terms of runlengths with additive error terms $\omega_i$ (as in \cite[Eq.\ (4)]{ros11}),  \cite[Eq.\ (4)]{ros11} is modified by (\ref{eq:runlFormat1})  and discretization to
\begin{equation}  \label{eq:model}
\tilde{y}_i = \tilde{x}_i + \omega_i 
\end{equation}
where
\begin{equation} \label{eq:tran_prob_rounding}
\begin{split}
{P}(\omega_i=\omega | \tilde{x}_i=\tilde{x})
= \begin{cases}
p(\tilde{x}), & \text{if $\omega=-1$ and $\tilde{x}>1$}\\
0, & \text{if $\omega=-1$ and $\tilde{x}=1$}\\
1-2p(\tilde{x}), & \text{if $\omega=0$ and $\tilde{x}>1$}\\
1-p(\tilde{x}), & \text{if $\omega=0$ and $\tilde{x}=1$}\\
p(\tilde{x}), & \text{if $\omega=1$ and $\tilde{x} \geq 1$}\\
0, & \text{otherwise} \end{cases}
\end{split}
\end{equation}
and $p(L) = Q \left(\frac{1}{2L\varepsilon} \right)$.


As another example, we can define a quantization scheme $\mathcal{Q}(\mathcal{A}) = \mathcal{Q}(\mathcal{A},\mathcal{T})$, where  the quantization threshold $t_l=2a_{l-1}a_{l}/(a_{l-1}+a_{l})$, $l=2,\ldots,|\mathcal{A}|$. In a similar manner, as for $\mathcal{Q}_{\rm rounding}$, we can express the  discretized Gaussian single-shift channel (now with quantization scheme $\mathcal{Q}(\mathcal{A})$) 
in terms of runlengths with additive error terms $\omega_i$ as in (\ref{eq:model}), but with transition probabilities
\ifonecolumn
\begin{equation} \label{eq:tran_prob_q}
{P}(\omega_i=\omega | \tilde{x}_i=\tilde{x})
= \begin{cases}
p(\alpha(\tilde{x})), & \text{if $\omega=-1$ and $\tilde{x}>1$}\\
0, & \text{if $\omega=-1$ and $\tilde{x}=1$}\\
1-p(\alpha(\tilde{x}))-p(\beta(\tilde{x})), & \text{if $\omega=0$ and $\tilde{x}>1$}\\
1-p(\beta(\tilde{x})), & \text{if $\omega=0$ and $\tilde{x}=1$}\\
p(\beta(\tilde{x})), & \text{if $\omega=1$ and $\tilde{x} \geq 1$}\\
0, & \text{otherwise} \end{cases}
\else
\begin{equation} \label{eq:tran_prob_q}
\begin{split}
&{P}(\omega_i=\omega | \tilde{x}_i=\tilde{x})\\
&= \begin{cases}
p(\alpha(\tilde{x})), & \text{if $\omega=-1$ and $\tilde{x}>1$}\\
0, & \text{if $\omega=-1$ and $\tilde{x}=1$}\\
1-p(\alpha(\tilde{x}))-p(\beta(\tilde{x})), & \text{if $\omega=0$ and $\tilde{x}>1$}\\
1-p(\beta(\tilde{x})), & \text{if $\omega=0$ and $\tilde{x}=1$}\\
p(\beta(\tilde{x})), & \text{if $\omega=1$ and $\tilde{x} \geq 1$}\\
0, & \text{otherwise} \end{cases}
\end{split}
\fi
\end{equation}
where
\begin{equation} \label{eq:alphabeta}
\alpha(\tilde{x}) = \frac{\tilde{x}_{\rm previous} + \tilde{x}}{2(\tilde{x} - \tilde{x}_{\rm previous})} \text{ and } \beta(\tilde{x}) = \frac{\tilde{x}_{\rm next} + \tilde{x}}{2(\tilde{x}_{\rm next} - \tilde{x})}
\end{equation}
and where $\tilde{x}_{\rm previous}$ (resp.\ $\tilde{x}_{\rm next}$)  is the closest value to $\tilde{x}$ allowed by the quantization scheme that is also strictly smaller (resp.\ larger) than $\tilde{x}$.

As will become clear later, this quantization scheme outperforms the general rounding scheme defined above.  However, note that when $a_{|\mathcal{A}|-1} = a_{|\mathcal{A}|}-1$ and $a_{|\mathcal{A}|}$ is large, the performance approaches the performance of the discretized Gaussian shift channel with quantization scheme $\mathcal{Q}_{\rm rounding}$ for low values of $\varepsilon$.


We can make the following remarks in connection with the Gaussian shift channel.
\begin{enumerate}[(i)]
  \item As can be seen from Fig.~\ref{fig:Qx}, when considering the ``likely error patterns'', we need to be concerned mainly about the longest runs of equal symbols. The exception to this pragmatic rule occurs when, for some codes, it is possible to correct all shifts (up to some order, where a single shift is a shift of order one, a double shift is a shift of order two, and so on) corresponding to  maximum-length runs.

\begin{figure}[!hbp]
\centerline{\includegraphics[width=8cm,keepaspectratio=true,angle=0]{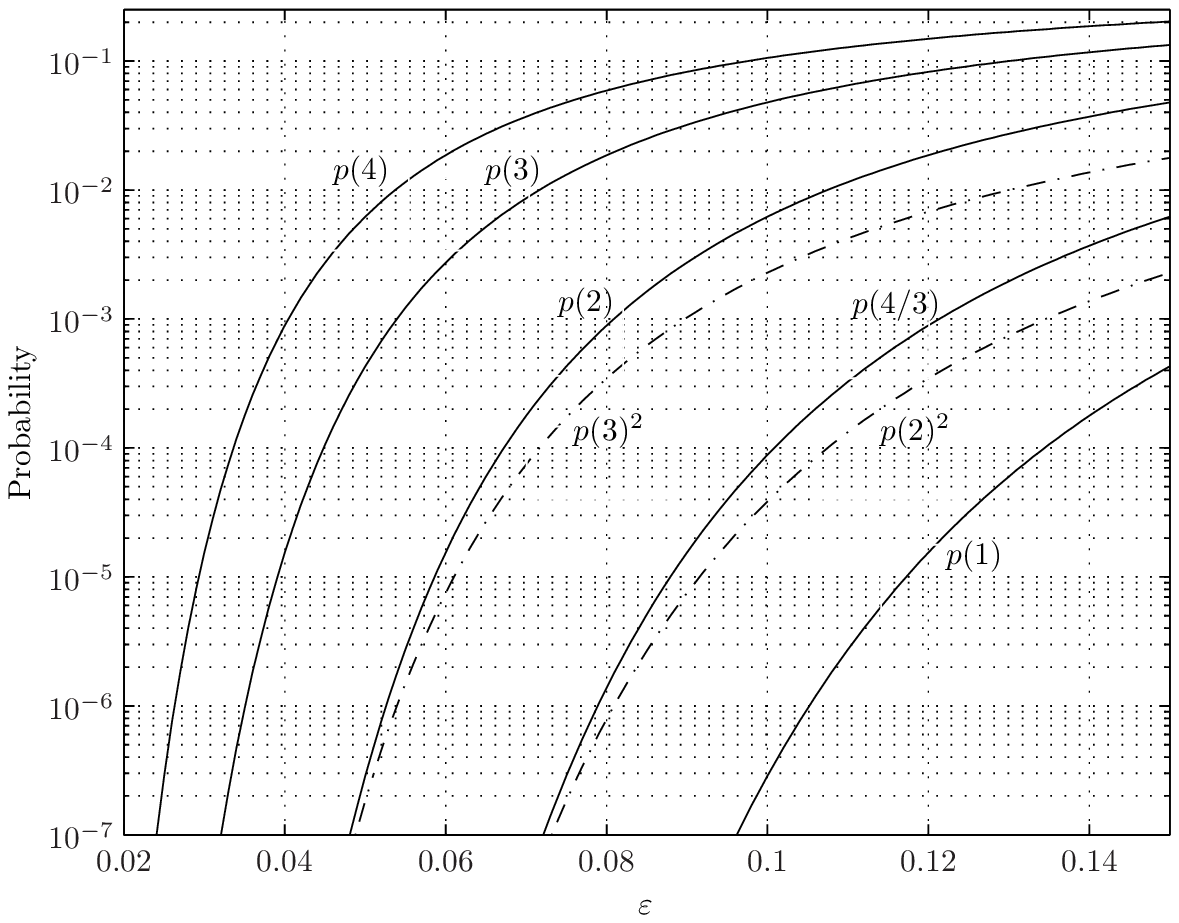}}
\caption{Comparison of shift probabilities (with quantization scheme $\mathcal{Q}_{\rm rounding}$) versus $\varepsilon$ for runlengths $1$, $2$, $3$, and $4$.}
\label{fig:Qx}
\end{figure}

\item For \emph{many simple codes} used on the discretized Gaussian shift channel the frame error rate (FER), denoted by $P(\mbox{FE})$,  can be simplified to, respectively,

\begin{equation} \notag
  P(\mbox{FE}) \approx S_L \cdot p(L)
\end{equation}
and
\begin{equation} \notag
  P(\mbox{FE}) \approx S_L \cdot p(L-1/2)
\end{equation}
with quantization schemes $\mathcal{Q}_{\rm rounding}$ and $\mathcal{Q}(\mathcal{A})$, where $\mathcal{A}=\{a_1,\ldots,a_{|\mathcal{A}|-2},L-1,L\}$, and
$S_L$ is some constant (representing a count of different error events) depending on the code and on the specific decoder, assuming that the most likely error event when using the code is connected with the confusion of runlengths of length $L$ with some other run of length $L-1$. We omit the details, but will show examples later (see Theorems~\ref{thm:RLL} and \ref{thm:RLLnew}).
 \item Error avoidance versus error control: Suppose we can design an error correction encoder that admits runlengths of length at most 2; that has a decoder that can correct all error events involving a single shift of a \emph{single} run of length 2, but that will make a mistake if two or more such  event occurs. Such a decoder should have a FER on the order of $p(2)^2$  (with quantization scheme $\mathcal{Q}_{\rm rounding}$) for small $\varepsilon$. Observe from Fig.~\ref{fig:Qx} that $p(2)^2 > p(1)$. Can we design a  code with a  simple decoder that behaves as $p(1)$? Yes, we can; see Sections~\ref{sec:13:13}, \ref{sec:1:13}, and \ref{sec:relatedconstraints}.
  \item Observe that the discretized Gaussian single-shift channel is a \emph{special form of an insertion-deletion channel}, which randomly may extend or shorten the runs of transmitted identical symbols, but where the statistics of this random process depend on the length of the runs.
Codes for insertion-deletion channels have been studied, but to a moderate extent, and some of the best known codes, such as the Varshamov-Tenengolz codes \cite{klo95} and the codes in \cite{LiuMitz2010}, are apparently too complex for the application in question and also do not possess the appropriate modulation constraints, to be discussed below.
  \item An intelligent receiver tag should realize that any received run longer than the maximum run must be the result of an insertion. Thus, such insertions can trivially be corrected. In consequence, \emph{for some codes}, the discretized Gaussian shift channel is approximately simply a special deletion channel that applies only to runs of maximum length.
  \item In general, for any code and channel, a receiver may use a forward error correction scheme (FEC), or an automatic-repeat-request (ARQ) scheme asking for retransmissions if an error is detected. Obviously, error detection is computationally simpler than error correction. Indeed, ARQ is typically used in standard protocols for passive RFID, utilizing a standard embedded cyclic redundancy check code.

For the binary symmetric channel it is further well-known that the FER associated with FEC is typically much higher than the probability of undetected error corresponding to ARQ. Counter-intuitively, this property does not necessarily apply with the discretized Gaussian shift channel.
%
\end{enumerate}

\section{Channel Capacity} \label{sec:capacity}

In this section, we will consider the channel capacity of the discretized Gaussian shift channel.

Since a sequence of transmitted consecutive bits can not disappear (the quantization schemes quantize each real-valued received number to a positive integer) and consecutive samplings of $K$ are independent, the discretized Gaussian shift channel (with any quantization scheme) is really a discrete memoryless channel operating on runlengths with the positive integer values as input and output alphabet, and with  channel transition probabilities that depend on $\varepsilon$ and the quantization scheme. Now, we define a truncated version of the channel, denoted by $\mathcal{H}_{L,T}$, with input alphabet $\mathcal{X}=\{1,\dots,L\}$, output alphabet $\mathcal{Y}=\{1,\dots,L'\}$, where $L$ and $L'$ are integers greater than one, and channel transition probabilities $f_{Y|X}(y|x)$. The parameter $L'$ is the smallest integer output of the discretized Gaussian shift channel (with a given quantization scheme) such that the probability of observing $L'$ for any given input $x \in \mathcal{X}$ is smaller than some small threshold probability $T$. The normalized mutual information between the channel input $X$ and channel output $Y$, denoted by $\tilde{I}(X;Y)$ and measured in bits per input symbol, can be expressed by \cite[Eq.\ (3)]{sha91}


\ifonecolumn
\begin{equation} \label{34:eq:I}
\tilde{I}(X;Y) = \frac{I(X;Y)}{\mathbb{E}[X]} = \frac{\sum_{y \in \mathcal{Y}} \sum_{x \in \mathcal{X}} f_X(x) f_{Y|X}(y|x) \log_2 \left( \frac{f_{Y|X}(y|x)}{\sum_{j \in \mathcal{X}} f_X(j)f_{Y|X}(y|j)} \right)}{\sum_{j \in \mathcal{X}} j \cdot f_X(j)}
\end{equation}
\else
\begin{equation} \label{34:eq:I}
\begin{split}
&\tilde{I}(X;Y) = \frac{I(X;Y)}{\mathbb{E}[X]} \\
&= \frac{\sum_{y \in \mathcal{Y}} \sum_{x \in \mathcal{X}} f_X(x) f_{Y|X}(y|x) \log_2 \left( \frac{f_{Y|X}(y|x)}{\sum_{j \in \mathcal{X}} f_X(j)f_{Y|X}(y|j)} \right)}{\sum_{j \in \mathcal{X}} j \cdot f_X(j)} 
\end{split}
\end{equation}
\fi
where $I(X;Y)$ denotes the mutual information between $X$ and $Y$ and $\mathbb{E}[X]$ the expectation of $X$ with respect to the input probability distribution $f_X(x)$. Now,
the capacity of $\mathcal{H}_{L,T}$ (in bits per symbol) can be obtained by maximizing the fraction in (\ref{34:eq:I}) over all input probability distributions $f_X(x)$. Note that since the channel is memoryless, it is sufficient to consider only a single use of the channel, i.e., not sequences of length $N$ as in  \cite[Eq.\ (3)]{sha91}.

We remark that if the channel could in fact remove runlengths, then the channel would resemble a deletion channel with substitution errors operating on runlengths. From  an information-theoretic perspective, such a channel is much harder to analyze, since there is no finite-letter expression for the channel capacity \cite{fer11}. 


From \cite[p.\ 191]{cov06}, we know that the numerator of (\ref{34:eq:I}), i.e., the mutual information between $X$ and $Y$,  is a continuous and concave function of $f_X(x)$. Thus, the maximization of the \emph{un-normalized} mutual information (i.e., the maximization of the numerator of (\ref{34:eq:I}) over the set of all input probability distributions $f_X(x)$) can be done using, for instance,  a gradient ascent algorithm, or the iterative Blahut-Arimoto algorithm \cite{ari72,bla72}.

\begin{proposition}
The normalized mutual information $\tilde{I}(X;Y)$ in (\ref{34:eq:I}) is  quasiconcave in $f_X(x)$.
\end{proposition}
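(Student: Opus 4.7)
The plan is to exploit the standard fact that a ratio of the form (nonnegative concave)/(positive affine) is quasiconcave. I would first record the two structural facts about the numerator and denominator of $\tilde{I}(X;Y)$ viewed as functions of $f_X$ on the probability simplex $\Delta = \{f_X : f_X(x) \geq 0, \sum_x f_X(x) = 1\}$: (a) the mutual information $I(X;Y)$ is concave in $f_X$ for fixed $f_{Y|X}$, which is the cited result \cite[p.\ 191]{cov06} and is already invoked immediately before the proposition; (b) the denominator $\mathbb{E}[X] = \sum_{j \in \mathcal{X}} j \cdot f_X(j)$ is affine in $f_X$, and strictly positive on $\Delta$ because every $j \in \mathcal{X} = \{1,\dots,L\}$ is at least $1$, so $\mathbb{E}[X] \geq 1 > 0$. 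Consequently $\tilde{I}(X;Y)$ is well-defined and nonnegative on all of $\Delta$.

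To establish quasiconcavity I would verify that every upper level set
\begin{equation*}
U_\alpha = \{\, f_X \in \Delta : \tilde{I}(X;Y) \geq \alpha \,\}
\end{equation*}
is convex. For $\alpha \leq 0$, $U_\alpha = \Delta$ because $I(X;Y) \geq 0$ and $\mathbb{E}[X] > 0$, and $\Delta$ is convex. For $\alpha > 0$, since $\mathbb{E}[X] > 0$ on $\Delta$ I can clear the denominator and rewrite
\begin{equation*}
U_\alpha = \{\, f_X \in \Delta : I(X;Y) - \alpha\, \mathbb{E}[X] \geq 0 \,\}.
\end{equation*}
The map $f_X \mapsto I(X;Y) - \alpha\,\mathbb{E}[X]$ is the sum of a concave function (by (a)) and an affine function (by (b), with coefficient $-\alpha$), hence concave. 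Its $0$-superlevel set is therefore convex, intersected with the convex set $\Delta$ it remains convex, so $U_\alpha$ is convex. Since all upper level sets are convex, $\tilde{I}(X;Y)$ is quasiconcave in $f_X$.

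I do not expect any real obstacle; the argument is essentially the standard ``concave-over-affine-positive is quasiconcave'' lemma, and the only things to check carefully are the sign of the denominator (immediate from $\mathcal{X} \subseteq \mathbb{Z}_{>0}$) and the concavity of $I(X;Y)$ (quoted just above the proposition). One small subtlety worth mentioning briefly in the proof is that $\tilde{I}$ is \emph{not} concave in general — hence the weaker quasiconcavity is the right statement — but quasiconcavity is exactly what one needs to ensure that any local maximum of $\tilde{I}$ over $\Delta$ is global, which is what makes gradient-type or iterative schemes meaningful for computing the capacity of $\mathcal{H}_{L,T}$.
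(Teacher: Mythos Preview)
Your argument is correct and is the standard route: write $\tilde{I}(X;Y)$ as a ratio of a nonnegative concave numerator and a strictly positive affine denominator, then check directly that every upper level set $U_\alpha$ is convex by clearing the denominator and observing that $I(X;Y)-\alpha\,\mathbb{E}[X]$ is concave.

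The paper proceeds differently. It observes that $1/\mathbb{E}[X]$ is quasilinear (its level sets coincide with those of the affine function $\mathbb{E}[X]$, hence are hyperplanes), that $I(X;Y)$ is quasiconcave (being concave), and then asserts that the product of two nonnegative quasiconcave functions is again quasiconcave. That last assertion is not true in general: on the positive orthant in $\mathbb{R}^2$, the functions $f(x,y)=x$ and $g(x,y)=y$ are both nonnegative and quasiconcave (indeed linear), yet $fg(x,y)=xy$ has nonconvex upper level sets. So the paper's justification, as written, has a gap, even though its conclusion is correct. Your level-set argument sidesteps this entirely by exploiting the \emph{affineness} of the denominator rather than merely its quasiconcavity, which is precisely the extra structure needed. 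Your proof is the one to keep.
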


\begin{proof}
Since $\mathbb{E}[X]$ is a linear function on $f_X(x)$, it is obviously a convex function on $f_X(x)$. Furthermore,  since $\mathbb{E}[X]$ is strictly positive, then $1/\mathbb{E}[X]$ is also a convex function on $f_X(x)$. Since $I(X;Y)$ is continuous and concave (from  \cite[p.\ 191]{cov06}), it follows  that the fraction in (\ref{34:eq:I}) is a product of a convex and a concave function. Now, $1/\mathbb{E}[X]$ is actually both quasiconvex and quasiconcave because the upper and lower countoursets are always convex sets, since the level sets are linear varieties  (they are linear for $\mathbb{E}[X]$, and hence they are also linear for $1/\mathbb{E}[X]$, since they have the same level sets). Every concave function is quasiconcave, hence $I(X;Y)$ is quasiconcave. Thus, the normalized mutual information in (\ref{34:eq:I}) is the product of two quasiconcave and nonnegative 
functions, which again is quasiconcave.
\end{proof}

The function in (\ref{34:eq:I}) is continuous in $f_X(x)$ (the denominator is strictly positive and continuous, and the numerator is continuous), which is not a general property of being quasiconcave. Furthermore, any \emph{strong} local maximum  is a global maximum for any quasiconcave function \cite{lue68} (the result is formulated for quasiconvex functions in Lemma~1 in \cite{lue68}). Thus, a gradient ascent algorithm can be used to find the global maximum of any  continuous differentiable quasiconcave function by checking for strict maximality.

For illustration purposes, in Fig.~\ref{34:fig:1}, the normalized  mutual information from (\ref{34:eq:I}) is plotted as a function of $f_X(1)$ and $f_X(2)$ when $L=3$, for $\varepsilon=0.15$. The threshold probability is $T=10^{-8}$ and the quantization scheme $\mathcal{Q}_{\rm rounding}$ has been used. 

\ifonecolumn
\begin{figure}[t!]
\par
\begin{center}
\includegraphics[width=0.5\columnwidth]{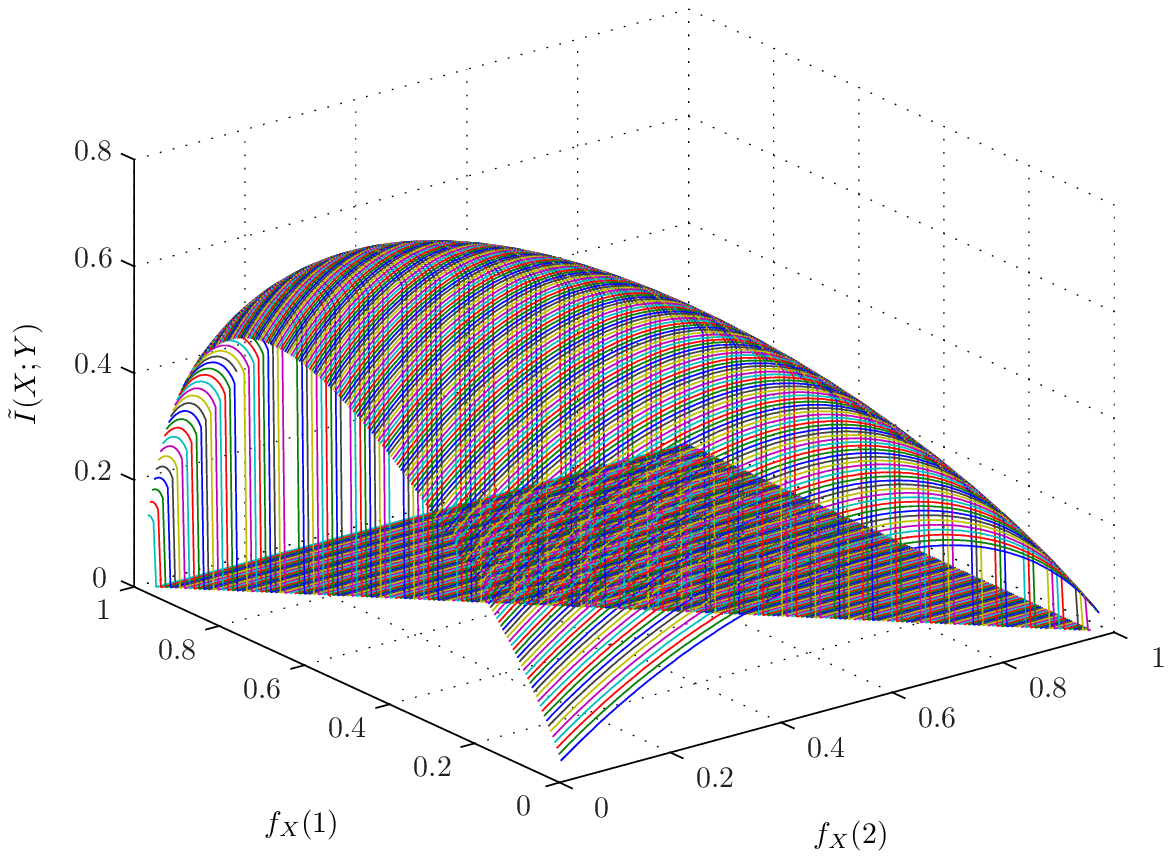}
\end{center}
\caption{\label{34:fig:1} {Normalized mutual information as a function of $f_X(1)$ and $f_X(2)$ when $L=3$ ($f_X(1)+f_X(2)+f_X(3)=1$), for $\varepsilon=0.15$. The threshold probability is $T=10^{-8}$ and the quantization scheme $\mathcal{Q}_{\rm rounding}$ has been used.}}
\end{figure}
\else
\begin{figure}[t!]
\par
\begin{center}
\includegraphics[width=\columnwidth]{Figure3.eps}
\end{center}
\caption{\label{34:fig:1} {Normalized mutual information as a function of $f_X(1)$ and $f_X(2)$ when $L=3$ ($f_X(1)+f_X(2)+f_X(3)=1$), for $\varepsilon=0.15$. The threshold probability is $T=10^{-8}$ and the quantization scheme $\mathcal{Q}_{\rm rounding}$ has been used.}}
\end{figure}
\fi

Due to the constraint $\sum_{x \in \mathcal{X}} f_X(x)=1$, the normalized mutual information in (\ref{34:eq:I}) is really a function of $L-1$ variables $f_X(x)$, $x=1,\dots,L-1$. Thus, we may substitute $f_X(L)=1-\sum_{x=1}^{L-1} f_X(x)$ into  (\ref{34:eq:I}) and then compute the partial derivatives with respect to $f_X(x)$, $x=1,\dots,L-1$.

\begin{proposition} \label{34:lem:2}
The partial derivative of  the normalized mutual information $\tilde{I}(X;Y)$ in (\ref{34:eq:I})  with respect to $f_X(x)$, $x=1,\dots,L-1$, is
\begin{equation} \notag
\frac{\partial \tilde{I}(X;Y)}{\partial f_X(x)} = \frac{ \frac{\partial I(X;Y)}{\partial f_X(x)} \sum_{j \in \mathcal{X}} j \cdot f_X(j) - I(X;Y) (x-L)}{ \left(  \sum_{j \in \mathcal{X}} j \cdot f_X(j) \right)^2}
\end{equation}
where
\ifonecolumn
\begin{equation} \notag
\frac{\partial I(X;Y)}{\partial f_X(x)} = \sum_{y \in \mathcal{Y}} f_{Y|X}(y|x) \log_2 \left( \frac{ f_{X|Y}(x|y)}{f_X(x)} \right)
- \sum_{y \in \mathcal{Y}} f_{Y|X}(y|L) \log_2 \left( \frac{f_{X|Y}(L|y)}{f_X(L)} \right).
\end{equation}
\else
\begin{equation} \notag
\begin{split}
\frac{\partial I(X;Y)}{\partial f_X(x)} &= \sum_{y \in \mathcal{Y}} f_{Y|X}(y|x) \log_2 \left( \frac{ f_{X|Y}(x|y)}{f_X(x)} \right) \\
&\;\;\;\;- \sum_{y \in \mathcal{Y}} f_{Y|X}(y|L) \log_2 \left( \frac{f_{X|Y}(L|y)}{f_X(L)} \right).
\end{split}
\end{equation}
\fi
\end{proposition}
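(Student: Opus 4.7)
The plan is to obtain the expression in two stages: first differentiate the outer quotient $\tilde{I}(X;Y)=I(X;Y)/\mathbb{E}[X]$, and then open up the inner derivative $\partial I(X;Y)/\partial f_X(x)$ using the standard decomposition $I(X;Y)=H(Y)-H(Y|X)$. Throughout I treat $f_X(1),\dots,f_X(L-1)$ as the independent variables and regard $f_X(L)=1-\sum_{j=1}^{L-1}f_X(j)$ as a dependent quantity whose partials contribute through the chain rule.

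For the outer step, a direct application of the quotient rule gives
\begin{equation} \notag
\frac{\partial \tilde{I}(X;Y)}{\partial f_X(x)} = \frac{\frac{\partial I(X;Y)}{\partial f_X(x)}\,\mathbb{E}[X] - I(X;Y)\,\frac{\partial \mathbb{E}[X]}{\partial f_X(x)}}{\mathbb{E}[X]^2}.
\end{equation}
The expectation takes the form $\mathbb{E}[X]=L+\sum_{j=1}^{L-1}(j-L)f_X(j)$ after substitution, so $\partial \mathbb{E}[X]/\partial f_X(x)=x-L$, which explains the $(x-L)$ factor and its sign in the claimed formula.

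For the inner step, substitute $f_X(L)$ into $f_Y(y)=\sum_j f_X(j)f_{Y|X}(y|j)$ to get $\partial f_Y(y)/\partial f_X(x)=f_{Y|X}(y|x)-f_{Y|X}(y|L)$. Differentiating $H(Y)=-\sum_y f_Y(y)\log_2 f_Y(y)$, the $\log_2 e$ terms vanish because $\sum_y(f_{Y|X}(y|x)-f_{Y|X}(y|L))=0$; and differentiating $H(Y|X)=\sum_{x'}f_X(x')H(Y|X=x')$ gives $H(Y|X=x)-H(Y|X=L)$. Combining,
\begin{equation} \notag
\frac{\partial I(X;Y)}{\partial f_X(x)} = \sum_{y\in\mathcal{Y}} f_{Y|X}(y|x)\log_2\!\frac{f_{Y|X}(y|x)}{f_Y(y)} - \sum_{y\in\mathcal{Y}} f_{Y|X}(y|L)\log_2\!\frac{f_{Y|X}(y|L)}{f_Y(y)}.
\end{equation}
Finally, Bayes' identity $f_{Y|X}(y|x)/f_Y(y)=f_{X|Y}(x|y)/f_X(x)$ rewrites each summand into the posterior form stated in the proposition, and substituting back into the quotient rule yields the claimed expression.

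There is no genuine obstacle here; the entire argument is routine calculus once the constraint $f_X(L)=1-\sum_{j=1}^{L-1}f_X(j)$ is respected. The only point requiring care is the bookkeeping of the chain rule: one must remember that perturbing $f_X(x)$ for $x<L$ simultaneously decreases $f_X(L)$ by the same amount, which is precisely what produces both the $(x-L)$ in the $\mathbb{E}[X]$ derivative and the "$-\sum_y f_{Y|X}(y|L)\log_2(\cdot)$" correction term in the mutual-information derivative.
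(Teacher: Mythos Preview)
Your proof is correct and is exactly the ``straightforward calculus'' the paper alludes to; the paper itself offers no further detail beyond that phrase. Your careful handling of the constraint $f_X(L)=1-\sum_{j=1}^{L-1}f_X(j)$ via the chain rule is precisely what is needed, and the use of $I(X;Y)=H(Y)-H(Y|X)$ together with Bayes' rule to reach the posterior form is the natural route.
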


\begin{proof}
This follows from straightforward calculus.
\end{proof}

In summary: Recall that the capacity of the channel $\mathcal{H}_{L,T}$ is equal to $\max_{f_X(x)}\tilde{I}(X;Y)$, and that Proposition 2 shows that $\tilde{I}(X;Y)$ is quasiconcave in $f_X(x)$. Thus, in order to numerically determine the capacity of the channel $\mathcal{H}_{L,T}$, we have implemented a steepest ascent method using Proposition~\ref{34:lem:2} for the expression of the gradient. In addition, we need to check if the located stationary point of $\tilde{I}(X;Y)$, i.e.,  a point for which the partial derivatives $\frac{\partial \tilde{I}(X;Y)}{\partial f_X(x)}$ are zero for every $f_X(x)$, indeed corresponds to a strict maximum of $\tilde{I}(X;Y)$. If not, another random starting point for the steepest ascent method is chosen, and the procedure is repeated until a strict maximum of $\tilde{I}(X;Y)$ is located.

\subsection{Numerical Results}

\ifonecolumn
\begin{figure}[tbp]
\par
\begin{center}
\includegraphics[width=0.5\columnwidth]{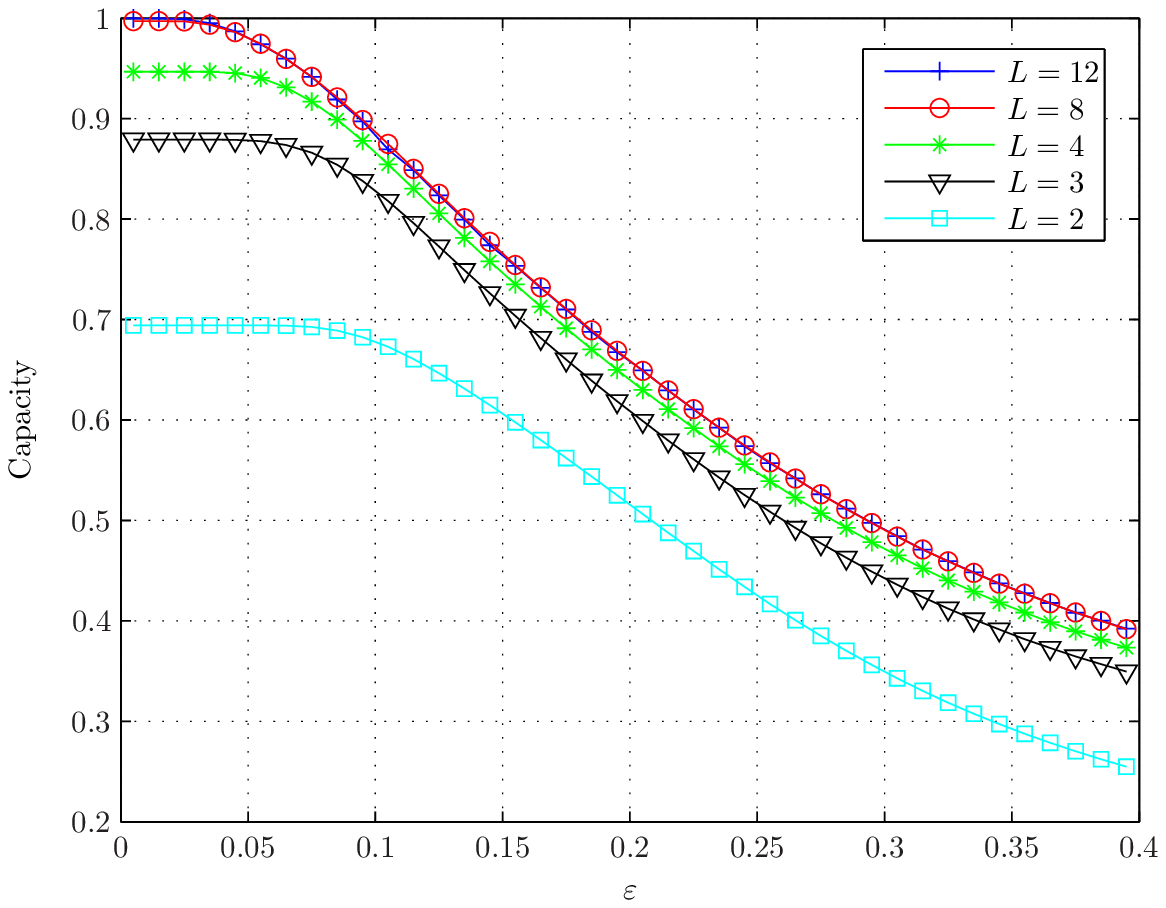}
\end{center}
\caption{\label{34:fig:2} {The capacity of the channel $\mathcal{H}_{L,T}$ for different values of $L$ as a function of $\varepsilon$. The threshold probability is $T=10^{-8}$ and the quantization scheme $\mathcal{Q}_{\rm rounding}$ has been used.}}
\end{figure}
\else
\begin{figure}[tbp]
\par
\begin{center}
\includegraphics[width=\columnwidth]{Figure4.eps}
\end{center}
\caption{\label{34:fig:2} {The capacity of the channel $\mathcal{H}_{L,T}$ for different values of $L$ as a function of $\varepsilon$. The threshold probability is $T=10^{-8}$ and the quantization scheme $\mathcal{Q}_{\rm rounding}$ has been used.}}
\end{figure}
\fi

\ifonecolumn
\begin{figure}[tbp]
\par
\begin{center}
\includegraphics[width=0.5\columnwidth]{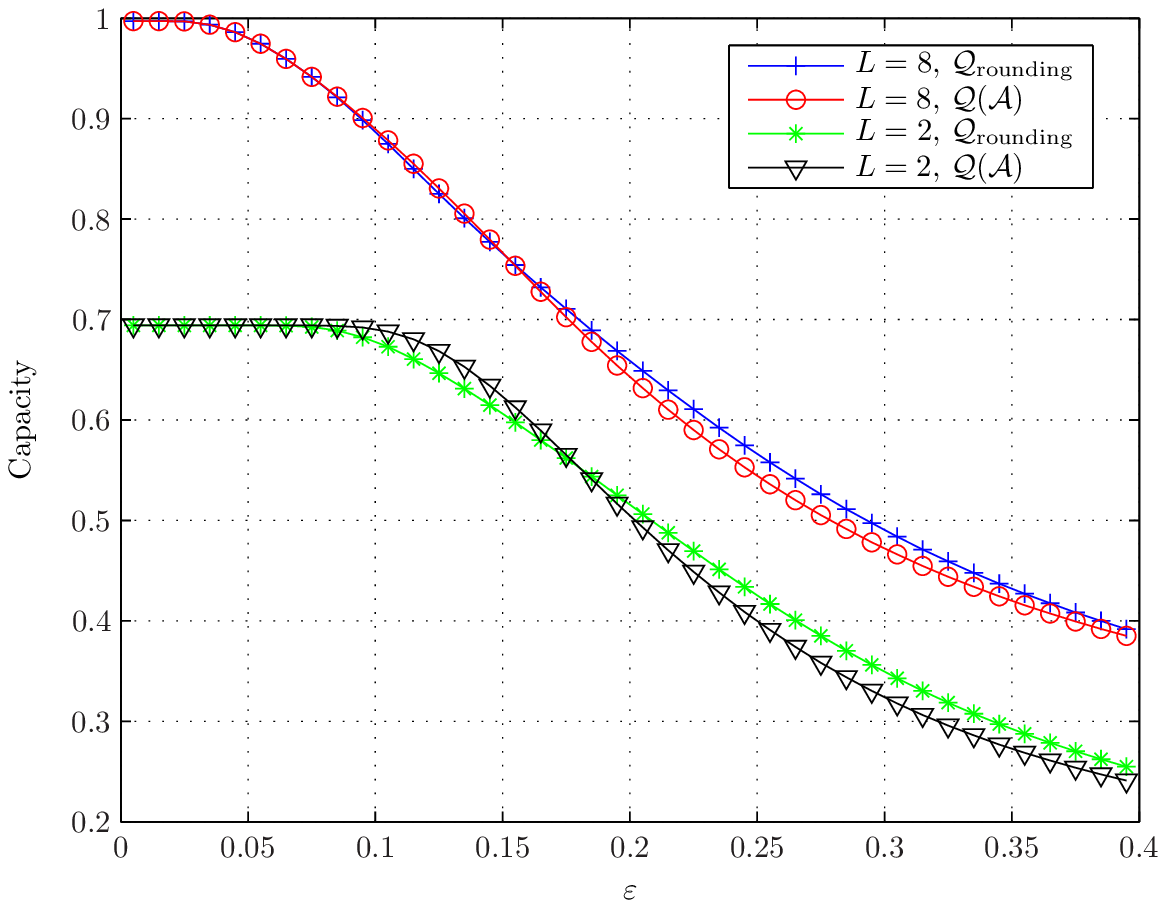}
\end{center}
\caption{\label{34:fig:3} {The capacity of the channel $\mathcal{H}_{L,T}$ for different values of $L$ as a function of $\varepsilon$ with both quantization schemes $\mathcal{Q}_{\rm rounding}$ and $\mathcal{Q}(A)$. The threshold probability is $T=10^{-8}$.}}
\end{figure}
\else
\begin{figure}[tbp]
\par
\begin{center}
\includegraphics[width=\columnwidth]{Figure5.eps}
\end{center}
\caption{\label{34:fig:3} {The capacity of the channel $\mathcal{H}_{L,T}$ for different values of $L$ as a function of $\varepsilon$ with both quantization schemes $\mathcal{Q}_{\rm rounding}$ and $\mathcal{Q}(A)$. The threshold probability is $T=10^{-8}$.}}
\end{figure}
\fi

In Fig.~\ref{34:fig:2}, we have plotted the capacity of the channel $\mathcal{H}_{L,T}$ as a function of $\varepsilon$ for various values of the input alphabet size $L$. The threshold probability is $T=10^{-8}$ and the quantization scheme $\mathcal{Q}_{\rm rounding}$ has been used. The curves in Fig.~\ref{34:fig:2} are computed using a gradient ascent algorithm using the gradient from Proposition~\ref{34:lem:2}. 
We observe that there is almost no difference between the curves for $L=8$ and $L=12$, which indicates convergence. Thus, the curve for $L=12$ should be very close to the \emph{exact} capacity of the discretized Gaussian shift channel with quantization scheme $\mathcal{Q}_{\rm rounding}$.

In Fig.~\ref{34:fig:3}, we have plotted the capacity of the channel $\mathcal{H}_{L,T}$ as a function of $\varepsilon$ for various values of the input alphabet size $L$ and with both quantization schemes $\mathcal{Q}_{\rm rounding}$ and $\mathcal{Q}(A)$. The threshold probability is $T=10^{-8}$. We observe that the quantization scheme $\mathcal{Q}_{\rm rounding}$ gives the best performance for intermediate-to-large values of $\varepsilon$, while  the quantization scheme $\mathcal{Q}(\mathcal{A})$ performs better when $\varepsilon$ decreases. 
Note that by looking at the optimal input distributions $f_X(x)$ we observe that the shortest runlengths (i.e., the smallest values of $x$) have the highest probabilities. Thus, an error control code for this channel should be designed to  give coded sequences in which small runlengths occur more frequently than longer runlengths. This is the topic of the next section. 
%


\section{Coding Schemes for the Reader-to-Tag Channel}
\label{sec:coding}
Among the  encoding schemes in use in communication standards for RFID applications, one can find several codes that are popular in general communication protocols, such as  NRZ, Manchester, Unipolar RZ, and Miller coding \cite{RFID2010}. Here, we will study the effect of some new encoding schemes for the reader-to-tag channel, considering power constraints (see Section~\ref{sec:power}) and the communication channel described in Section~\ref{sec:channelmodels}, i.e.,   the  Gaussian shift channel. As a reference, we will provide the Manchester code (described in Section~\ref{sec:Manchester}), and two variable-length codes presented in \cite{ros11} (and described in Sections~\ref{sec:10_011} and \ref{sec:101_01101}, respectively) and designed for the bit-shift channel from \cite{ros11,ros09glo}. 


Before describing the specific code constructions, we will briefly explain the concept of constrained coding.

\subsection{Runlength Limitations and Other Coding Constraints}
\label{sec:constrainedCoding}
We may desire and enforce that an encoded sequence satisfies certain constraints specified by a \emph{constraint graph} \cite{Immink91,MSW92,MRS98}. These constraints may, for example, be the power constraints described in Section~\ref{sec:power}, or runlength limitations, or a combination of these constraints. For the purpose of this paper, we shall denote a particular binary runlength limitation as $\mbox{RLL}({\cal L}_0,{\cal L}_1)$, where ${\cal L}_b$ is the set of admissible runlengths of binary symbol $b$. 
In the following, $\mathcal{O}(\cdot)$ refers to the \emph{big O notation} for describing the limiting behavior of functions.

\begin{theorem}
\label{thm:RLL}
If a code satisfying the $\mbox{RLL}([1,L],[1,L])$ limitation, where $[1,L] = \{1,2,\ldots,L\}$, is used on the discretized Gaussian shift channel with quantization scheme $\mathcal{Q}_{\rm rounding}$ and with an ML decoder, then the FER behaves as ${\cal O}(p(L))$ for small $\varepsilon$.
\end{theorem}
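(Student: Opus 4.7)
The plan is to bound the frame error rate from above by a simple union bound over single-run shift events, exploiting the fact that every admissible runlength is at most $L$ and that $p(k)$ is monotonically increasing in $k$.

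First, I would observe that under the $\mbox{RLL}([1,L],[1,L])$ constraint every runlength $\tilde{x}_i$ of the transmitted codeword lies in $\{1,\dots,L\}$. Reading off (\ref{eq:tran_prob_rounding}), the probability that a given run is received with $\omega_i \neq 0$ is at most $2p(\tilde{x}_i)$ (it is exactly $p(1)$ for $\tilde{x}_i=1$ and $2p(\tilde{x}_i)$ otherwise), and since $p(k) = Q(1/(2k\varepsilon))$ is increasing in $k$, this is in turn at most $2p(L)$.

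Next, I would split the frame error event according to whether the error vector $\boldsymbol{\omega}$ is identically zero. On the event $\boldsymbol{\omega} = \boldsymbol{0}$, the received runlength sequence coincides exactly with the transmitted one, so any ML decoder is guaranteed to output the transmitted codeword (the true codeword is the unique maximizer of the posterior, since distinct codewords have distinct runlength tuples once the first symbol is fixed). A frame error can therefore only occur on the complementary event $E = \{\exists i : \omega_i \neq 0\}$, so $\mathrm{FER} \leq P(E)$. Applying a union bound over the at most $n$ runs in a length-$n$ frame gives $P(E) \leq 2n \cdot p(L)$, hence $\mathrm{FER} = \mathcal{O}(p(L))$ as $\varepsilon \to 0$ with $n$ held fixed.

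The main step is conceptual rather than technical: one must recognize that there is no need to analyze the ML decoding regions in detail, because the dominance of $p(L)$ over all $p(k)$ with $k<L$ (which follows from $p(k)/p(L) \to 0$ as $\varepsilon \to 0$, by inspection of $Q(1/(2k\varepsilon))$) reduces the whole problem to the trivial observation that a correctly received frame is correctly decoded. The same reasoning also suggests that the bound is sharp up to a constant, consistent with the approximation $\mathrm{FER} \approx S_L \cdot p(L)$ indicated in Section~\ref{sec:channelmodels}.
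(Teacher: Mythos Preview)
Your argument is correct and follows essentially the same route as the paper: both hinge on the monotonicity of $p(\cdot)$, so that every per-run shift probability is bounded by $p(L)$, and hence the FER is $\mathcal{O}(p(L))$. The paper's proof is terser---it simply identifies the length-$L$ to length-$(L-1)$ confusion as the dominating error event and concludes---whereas you make the upper bound explicit via a union bound over runs and the observation that the ML decoder is correct on $\{\boldsymbol{\omega}=\boldsymbol{0}\}$; this is the same idea spelled out in more detail.
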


\begin{IEEEproof}
By looking at the transition probabilities in (\ref{eq:tran_prob_rounding}), we observe that the dominating error event (in terms of error probability) is when a length-$L$ runlength (the largest allowed by the code) is received as  a length-$(L-1)$ runlength. This is the case since $p(\cdot)$ is an increasing function of its argument. From  (\ref{eq:tran_prob_rounding}), it follows that the FER behaves as ${\cal O}(p(L))$ for small $\varepsilon$, and the result follows.
\end{IEEEproof}


\begin{theorem}
\label{thm:RLLnew}
If a code satisfying the $\mbox{RLL}([1,L],[1,L])$ limitation is used on the discretized Gaussian shift channel with quantization scheme $\mathcal{Q}([1,L])$ and with an ML decoder, then the FER behaves as ${\cal O}(p(L-1/2))$ for small $\varepsilon$.
\end{theorem}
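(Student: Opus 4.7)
The approach would parallel the proof of Theorem~\ref{thm:RLL}, substituting the transition probabilities (\ref{eq:tran_prob_q}) of the scheme $\mathcal{Q}([1,L])$ for those of $\mathcal{Q}_{\rm rounding}$ in (\ref{eq:tran_prob_rounding}). The first step is to pinpoint the most likely single-run shift. Because the admissible set $\mathcal{A}=[1,L]$ has uniform gap $1$ between consecutive elements, the definitions in (\ref{eq:alphabeta}) collapse to
\[
\alpha(\tilde{x})=\tilde{x}-\tfrac{1}{2} \quad \text{for } 2\le \tilde{x}\le L, \qquad \beta(\tilde{x})=\tilde{x}+\tfrac{1}{2} \quad \text{for } 1\le \tilde{x}\le L-1,
\]
both of which attain their maximum value $L-\tfrac{1}{2}$ at the interface between the two largest legal runlengths: $\alpha(L)=L-\tfrac{1}{2}$ (a length-$L$ run received as a length-$(L-1)$ run) and $\beta(L-1)=L-\tfrac{1}{2}$ (the reverse confusion). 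Since $p(\cdot)$ is strictly increasing, every other nonzero single-shift probability appearing in (\ref{eq:tran_prob_q}) has the form $p(r)$ with $r<L-\tfrac{1}{2}$, and is therefore asymptotically negligible compared with $p(L-\tfrac{1}{2})$ as $\varepsilon\to 0$.

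Next, I would handle the error-probability bookkeeping as in Theorem~\ref{thm:RLL}. Events involving two or more simultaneous single-shifts contribute $\mathcal{O}\bigl(p(L-\tfrac{1}{2})^{2}\bigr)=o\bigl(p(L-\tfrac{1}{2})\bigr)$, and a union bound over the at most $n$ runs of any codeword yields $P(\mathrm{FE})\le c\cdot p(L-\tfrac{1}{2})$ for a constant $c$ depending only on the code. Because the FER under ML decoding cannot exceed this union-bound estimate, we obtain $P(\mathrm{FE})=\mathcal{O}(p(L-\tfrac{1}{2}))$.

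The only delicate step is the monotonicity/maximization claim for $\alpha$ and $\beta$: one must check that the dominant argument of $p(\cdot)$ genuinely lives at the $L\leftrightarrow L-1$ interface rather than at some interior transition. In the uniform-gap-one setting this reduces to the one-line observation that $\tilde{x}\mp\tfrac{1}{2}$ is monotone in $\tilde{x}$, so no deeper structural argument is needed. Once the extremal shift is identified, the remainder of the argument is identical in spirit to the proof of Theorem~\ref{thm:RLL}, the key substitution being $p(L)\mapsto p(L-\tfrac{1}{2})$ which is precisely the improvement promised by Proposition~\ref{prop:1}.
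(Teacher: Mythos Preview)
Your proposal is correct and follows essentially the same approach as the paper's proof: identify the dominating single-shift event as the $L\leftrightarrow L-1$ confusion via the monotonicity of $p(\cdot)$, compute $\alpha(L)=L-\tfrac{1}{2}$ from (\ref{eq:alphabeta}), and conclude that the FER is $\mathcal{O}(p(L-\tfrac{1}{2}))$. You are a bit more explicit than the paper (you also note $\beta(L-1)=L-\tfrac{1}{2}$ and spell out the union-bound and multi-shift bookkeeping), but the underlying argument is the same.
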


\begin{IEEEproof}
By looking at the transition probabilities in (\ref{eq:tran_prob_q}), we observe that the dominating error event (in terms of error probability) is when a length-$L$ runlength (the largest allowed by the code) is received as a length-$(L-1)$ runlength. Again, this is, as noted in the proof of Theorem~\ref{thm:RLL}, the case since $p(\cdot)$ is an increasing function of its argument. From  (\ref{eq:tran_prob_q}), it follows that the FER behaves as ${\cal O}(p(\alpha(L)))$ for small $\varepsilon$, where (from (\ref{eq:alphabeta}))
\begin{displaymath}
\alpha(L) = \frac{L-1+L}{2(L-(L-1))} = \frac{2L-1}{2} = L-1/2
\end{displaymath}
and the result follows.
%
\end{IEEEproof}

The maximum rate of a constrained code is determined by the \emph{capacity} of the constraint, which can readily be calculated from the constraint graph \cite{Immink91,MSW92,MRS98}. There exist several techniques \cite{Immink91,MSW92,MRS98} for designing an encoder (of code rate upper-bounded by the capacity), and we refer the interested reader to these works for further details.


\subsection{The Manchester Code}
\label{sec:Manchester}

The Manchester code is a very simple block code that maps $0$ into $01$, and $1$ into $10$. The code is popular in many communication protocols, but one can observe that it also satisfies several of the criteria we can derive for a coding scheme to be used on a reader-to-tag discretized Gaussian shift channel: The maximum runlength is two; the average power is $1/2$; the minimum sustainable power is $1/2$; the local minimum power is $1/4$; the minimum Hamming distance is two, and the code is simple to decode. The performance of this code on the discretized Gaussian shift channel will be presented in Section~\ref{sec:sim}.

\subsection{The Code $\{10,011\}$ \cite{ros11,ros09glo}}
\label{sec:10_011}
The variable-length code $\{10,011\}$ is single bit-shift error correcting, i.e., it corrects any single bit-shift on the bit-shift model from \cite{ros11,ros09glo}, 
and has minimum sustainable power $1/2$, local minimum  power $1/3$,   and average power $3/5$. The rate of the code is $2/5$, the minimum runlength is $1$, and the maximum runlength is $3$. The performance of this code on the discretized Gaussian shift channel will be presented in Section~\ref{sec:sim}.

\subsection{The Code $\{101,01101\}$ \cite{ros11}}
\label{sec:101_01101}
The variable-length code $\{101,01101\}$ is single bit-shift error detecting, i.e., it detects any single bit-shift on the bit-shift channel from \cite{ros11,ros09glo}, 
and has minimum sustainable power $3/5$, local minimum  power $1/3$, and  average power $5/8$. The rate of the code is $1/4$, the minimum runlength is $1$, and the maximum runlength is $2$. The performance of this code on the discretized Gaussian shift channel will be presented in Section~\ref{sec:sim}.

\subsection{$\mbox{RLL}(\{1,2\},\{1,2\})$-Limited Codes} \label{sec:newcodes}

The capacity of the constraint $\mbox{RLL}(\{1,2\},\{1,2\})$ is $0.694$. Furthermore, it follows from Theorems~\ref{thm:RLL} and \ref{thm:RLLnew} that, similar to the Manchester code, any code with this runlength limitation has a FER on the order of ${\cal O}(p(2))$ and ${\cal O}(p(3/2))$, for small $\varepsilon$,  on the discretized Gaussian shift channel with quantization schemes $\mathcal{Q}_{\rm rounding}$ and $\mathcal{Q}([1,2])$, respectively.

\begin{example} \label{ex:1}
A two-state, rate-$2/3$ encoder for a $\mbox{RLL}(\{1\},\{1,\ldots,\infty\})$-constrained code is given in \cite{MSW92}. The encoder can be transformed into a four-state encoder for a $\mbox{RLL}(\{1,2\},\{1,2\})$-constrained code by a simple differential mapping. The encoder is shown in Fig.~\ref{fig:(01)a}, while a very simple decoder/demapper is provided in Table~\ref{tab:1}. 
The code has minimum sustainable power $1/3$, local minimum  power $1/5$,  and average power $1/2$.
\end{example}

\begin{figure}[!hbp]
\vspace{-1.0cm}
\centerline{\includegraphics[width=9cm,keepaspectratio=true,angle=0]{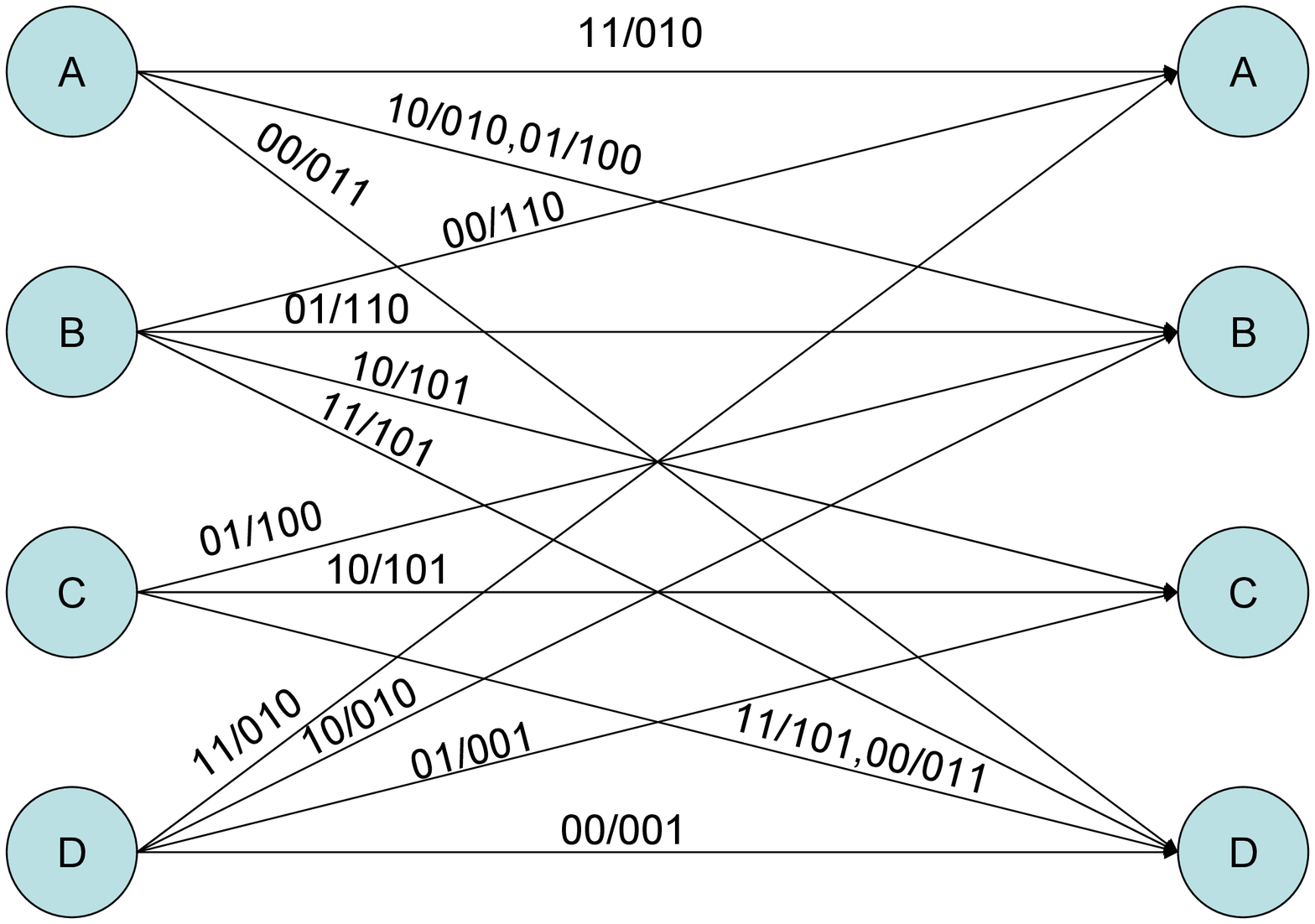}}
\vspace{-0.5cm}
\caption{An encoder for a $\mbox{RLL}(\{1,2\},\{1,2\})$-constrained code.}
\label{fig:(01)a}
\end{figure}

\begin{table}[!t]
\scriptsize \centering \caption{Look-up table decoding of the $\mbox{RLL}(\{1,2\},\{1,2\})$-constrained code from Example~\ref{ex:1} and with the encoder  given in Fig.~\ref{fig:(01)a}. Before decoding, if a run of at least three  zeros or ones is observed, change it to two, since it most likely comes from insertions.}\label{tab:1} 
\def\Hline{\noalign{\hrule height 2\arrayrulewidth}}
\vskip -1.5ex 
\begin{tabular}{c|c|c}
\Hline \\ [-2.0ex]
Current word & Next bits & Decode to \\
\hline%
\\ [-2.0ex] \hline  \\ [-2.0ex]%
%
$000$ \T\B  & Not possible & Detect insertion \\ \hline
$001$ \T & $010$, $001$ & $00$ \\
      \B & $1$, $011$   & $01$ \\ \hline
$010$ \T & $0$, $100$   & $11$ \\
      \B & $110$, $101$ & $10$ \\ \hline
$011$ \T & $0$ & $00$ \\
      \B & ($1$ means insertion) &  \\ \hline
$100$ \T & $1$ & $01$ \\
      \B & ($0$ means insertion) &  \\ \hline
$101$ \T & $010$, $001$ & $11$ \\
      \B & $1$, $011$   & $10$ \\ \hline
$110$ \T & $0$, $100$   & $00$ \\
      \B & $110$, $101$ & $01$ \\ \hline
$111$ \T\B & Not possible & Detect insertion \\
\hline
\end{tabular}
\end{table}


\begin{example}
A code with a very simple encoding and decoding can be obtained by using bit-stuffing. The resulting code is a variable-length code. The encoder keeps the information symbols $u_t$, $t = 1,\ldots,k$, unmodified, but inserts an extra inverted symbol $1-u_t$ if $u_t \equiv t \pmod 2$. The decoder destuffs the extra inserted symbols in a similar way. The encoder has rate $2/3$, minimum sustainable power $1/3$, local minimum  power $1/5$, average power $1/2$, and maximum runlength $2$.
\end{example}

\subsection{$\mbox{RLL}(\{1\},\{1,2\})$-Limited Codes}

The capacity of the constraint $\mbox{RLL}(\{1\},\{1,2\})$ is $0.406$. Thus, a practical rate is no higher than $2/5$. However, the FER on the discretized Gaussian shift channel behaves (for small $\varepsilon$) as ${\cal O}(p(2))$ and ${\cal O}(p(3/2))$  with quantization schemes $\mathcal{Q}_{\rm rounding}$ and $\mathcal{Q}([1,2])$, respectively.  The only advantage over the $\mbox{RLL}(\{1,2\},\{1,2\})$ limitation is a higher power content.


\subsection{$\mbox{RLL}(\{1,3\},\{1,3\})$-Limited Codes} \label{sec:13:13}

The capacity of the constraint  $\mbox{RLL}(\{1,3\},\{1,3\})$ is $0.552$. 

\begin{theorem}
\label{thm:RLL1}
The FER on the discretized Gaussian shift channel with quantization scheme $\mathcal{Q}_{\rm rounding}$ for $\mbox{RLL}(\{1,3\},\{1,3\})$-constrained codes is on the order  of ${\cal O}(p(1))$ for small $\varepsilon$.
\end{theorem}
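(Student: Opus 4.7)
The plan is to pinpoint the dominant uncorrectable error event under ML decoding for a code meeting the $\mbox{RLL}(\{1,3\},\{1,3\})$ constraint, and show that its probability behaves as $p(1)$. First, I would invoke the transition probabilities in (\ref{eq:tran_prob_rounding}) to record that $\omega_i\in\{-1,0,+1\}$, so that each runlength can move by at most one unit per transmission. Since only transmitted runlengths $\tilde{x}\in\{1,3\}$ are legal, the received runlengths must lie in $\{1,2,3,4\}$.

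Next, I would analyze the per-runlength behaviour of the ML decoder. A received $1$ or $3$ is already legal and is left unchanged. A received $4$ can only come from a transmitted $3$, since a $1$ cannot reach $4$ by a single shift, so the ML decoder reproduces the $3$ correctly. A received $2$ is genuinely ambiguous: it arises from a $1$ extended upward with probability $p(1)$, or from a $3$ shortened downward with probability $p(3)$; since $p(\cdot)$ is strictly increasing and therefore $p(3)>p(1)$, ML resolves every received $2$ as $3$. In particular this rule is correct whenever the source of the $2$ was a $3$, and wrong exactly when the source was a $1$.

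From this, the only single-shift event that causes a frame error is a length-$1$ run being shifted up to $2$ and then mis-decoded as $3$, which occurs with probability $p(1)$ per length-$1$ run. For higher-order events I would argue two cases: (i) any combination involving only length-$3$ shifts is still decoded correctly by the above rule, since every received $2$ maps to $3$ and every received $4$ maps to $3$; and (ii) any event with at least one length-$1$ shift has probability at most $p(1)\cdot p(L)^{r-1}$ for some $L\in\{1,3\}$ and $r\geq 2$, which is $o(p(1))$ as $\varepsilon\to 0$ because $p(L)\to 0$. Summing over the bounded number of length-$1$ runs in a frame yields $\mbox{FER}=\mathcal{O}(p(1))$.

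The main obstacle is handling case (i) above carefully, because a combined event such as ``two length-$3$ runs both shift'' has probability $\mathcal{O}(p(3)^{2})$, which is \emph{much larger} than $p(1)$ for small $\varepsilon$ (since $p(L)\sim\exp(-1/(8L^{2}\varepsilon^{2}))$). The point to establish is that every such event is nevertheless corrected by the per-symbol ML rule, so it contributes zero frame-error probability; this reduces to the key structural observation that, within the constraint $\mbox{RLL}(\{1,3\},\{1,3\})$ and with single-step shifts only, a received $4$ uniquely identifies a length-$3$ source, and ML on a received $2$ prefers $3$ over $1$. Once these are in hand, the dominant surviving contribution is the length-$1$ shift, giving the claimed $\mathcal{O}(p(1))$ bound.
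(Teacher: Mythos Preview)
Your overall strategy---specify the decoder and then enumerate the dominant uncorrectable events---matches the paper's, but you have built the argument on the wrong channel. By invoking (\ref{eq:tran_prob_rounding}) you restrict yourself to the discretized Gaussian \emph{single-shift} channel (truncation $\Gamma=1$), where indeed $\omega_i\in\{-1,0,+1\}$ and received runlengths lie in $\{1,2,3,4\}$. The theorem, however, is stated for $\mathcal{Q}_{\rm rounding}$ with no truncation, so a transmitted length-$3$ run can be rounded all the way down to $1$. That event has probability $Q\bigl((3-3/2)/(3\varepsilon)\bigr)=Q(1/(2\varepsilon))=p(1)$, i.e., it is of \emph{exactly the same order} as your length-$1\to 2$ error, and your single-shift framework simply cannot see it. Your justification ``a received $4$ can only come from a transmitted $3$, since a $1$ cannot reach $4$ by a single shift'' is likewise an artifact of the $\Gamma=1$ restriction.

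The paper's proof avoids this detour. It takes the same decoder (received $1\mapsto 1$, received $\geq 2\mapsto 3$) but analyzes it directly on the underlying Gaussian output: the rule amounts to a single analog threshold at $3/2$, so an error occurs if a length-$1$ run exceeds $3/2$, which happens with probability $p(1)$, or if a length-$3$ run falls below $3/2$, which also happens with probability $Q(1/(2\varepsilon))=p(1)$. Both dominant error modes are captured in one line, and your separate discussion of compound events and the $p(3)^{2}$ versus $p(1)$ comparison become unnecessary. Your argument can be repaired by dropping the reference to (\ref{eq:tran_prob_rounding}) and computing these threshold-crossing probabilities directly; as written it establishes the result only for the $\Gamma=1$ channel, not for the channel in the theorem statement.
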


\begin{IEEEproof}
The decoder works in the following way. Every received  run of length $1$ (on the discretized Gaussian shift channel with quantization scheme $\mathcal{Q}_{\rm rounding}$) is kept as is, and  every received run of length $\geq 2$ is assumed to be a run of length $3$. This decoder makes an error if a run of length $1$ is extended by the Gaussian shift channel to length more than $3/2$ (this happens with probability $p(1)$), or if a run of length $3$ is shortened to less than $3/2$ (this happens with probability $Q \left(\frac{3}{6 \varepsilon} \right) = p(1)$).
\end{IEEEproof}

We remark that on the discretized Gaussian shift channel with quantization scheme $\mathcal{Q}(\{1,3\})$, the error probability is of the same order for small $\varepsilon$, i.e., it behaves as $\mathcal{O}(p(1))$.


\begin{example} \label{ex:2}
A three-state, rate-$1/2$ encoder for a $\mbox{RLL}(\{1,3\},\{1,3\})$-constrained code is depicted in Fig.~\ref{fig:(13)a}, while a very simple decoder/demapper is provided in Table~\ref{tab:2}. The code has minimum sustainable power $1/4$, local minimum  power $1/7$,  and average power $13/24$.
\end{example}

\begin{figure}[!hbp]
\ifonecolumn
\vspace{-2.5cm}
\centerline{\includegraphics[width=15cm,keepaspectratio=true,angle=0]{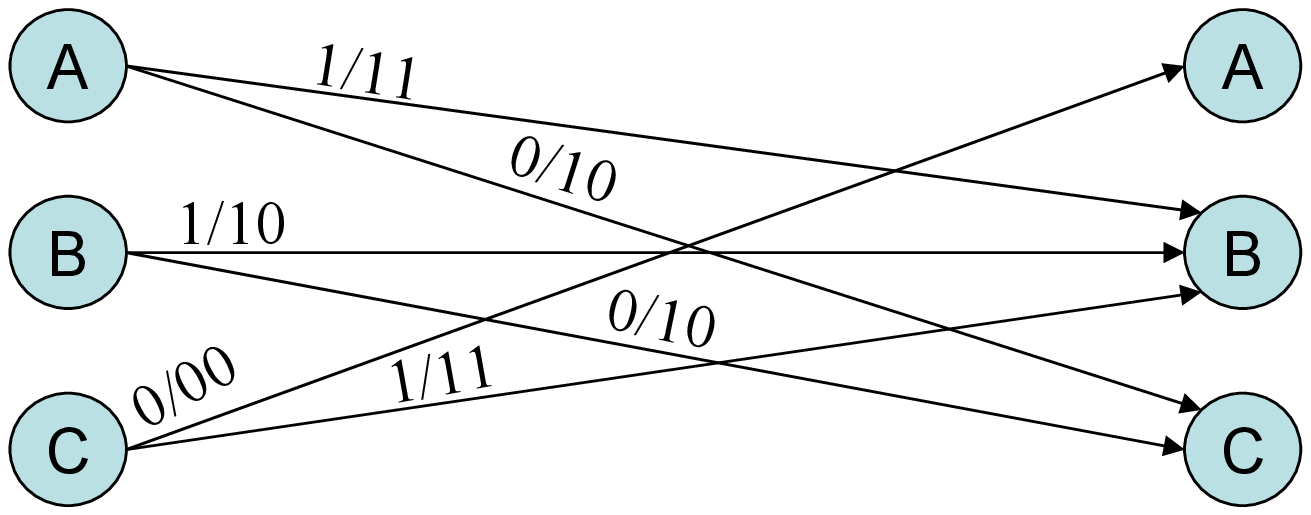}}
\vspace{-5.5cm}
\else
\vspace{-1.5cm}
\centerline{\includegraphics[width=12cm,keepaspectratio=true,angle=0]{Figure7.eps}}
\vspace{-3.0cm}
\fi
\caption{An encoder for a $\mbox{RLL}(\{1,3\},\{1,3\})$-constrained code.}
\label{fig:(13)a}
\end{figure}




\begin{table}[!t]
\scriptsize \centering \caption{Look-up table decoding of the $\mbox{RLL}(\{1,3\},\{1,3\})$-constrained code from Example~\ref{ex:2} and with the encoder given in Fig.~\ref{fig:(13)a}. Before decoding, if a run of two zeros or two ones is observed, change it to three, since it most likely comes from a deletion of a length-three run. Similarly, if a run of four zeros or ones is observed, change it to three.}\label{tab:2}
\def\Hline{\noalign{\hrule height 2\arrayrulewidth}}
\vskip -1.5ex 
\begin{tabular}{c|c|c}
\Hline \\ [-2.0ex]
Current word & Next bit pair & Decode to \\
\hline
\\ [-2.0ex] \hline  \\ [-2.0ex]

$00$ \T\B  & Whatever & $0$ \\ \hline
$01$ \T\B & Not possible & Detect error \\ \hline
$10$ \T & $00$ or $11$ & $0$ \\
     \B & $10$ & $1$ \\ \hline
$11$ \T\B & Whatever & $1$ \\ \hline
\end{tabular}
\end{table}

\begin{example}
A code with a very simple encoding and decoding can be obtained by using bit-stuffing. The resulting code is a variable-length code. The encoder keeps the information symbols $u_t$, $t = 1,\ldots,k$, unmodified, but inserts a pair of extra symbols $(u_t,1-u_t)$ if $u_t \equiv t \pmod 2$. The decoder destuffs the extra inserted symbols in a similar way. The encoder has rate $1/2$, minimum sustainable power $1/4$, local minimum  power $1/7$, average power $1/2$, and allowed runlengths $1$ and $3$.
\end{example}

\subsection{$\mbox{RLL}(\{1\},\{1,3\})$-Limited Codes} \label{sec:1:13}

The capacity of the constraint $\mbox{RLL}(\{1\},\{1,3\})$ is $0.347$. Furthermore, there is no difference in the asymptotic FER (i.e., the FER for small values of $\varepsilon$) with respect to $\mbox{RLL}(\{1,3\},\{1,3\})$-limited codes (the proof of Theorem~\ref{thm:RLL1} holds in this case as well). Thus,  the only advantage over the $\mbox{RLL}(\{1,3\},\{1,3\})$ limitation is a higher power content.

\begin{example} \label{ex:3}
The variable-length $\mbox{RLL}(\{1\},\{1,3\})$-constrained code with codewords $\{01,0111\}$ has rate $1/3$, minimum sustainable power $1/2$, local minimum  power $1/3$,  and average power $2/3$.
\end{example}

\subsection{$\mbox{RLL}(\{1,2,4\},\{1,2,4\})$-Limited Codes}

Codes satisfying the constraints
\ifonecolumn
\begin{align}
\mbox{RLL}(\{1,2,4\},\{1,2,4\}),\; \mbox{RLL}(\{1,2\},\{1,2,4\}),\;\text{and}\; \mbox{RLL}(\{1\},\{1,2,4\}) \notag
\end{align}
\else
\begin{align}
&\mbox{RLL}(\{1,2,4\},\{1,2,4\}),\; \mbox{RLL}(\{1,2\},\{1,2,4\}),\;\text{and} \notag \\
&\mbox{RLL}(\{1\},\{1,2,4\}) \notag
\end{align}
\fi
have capacities $0.811$, $0.758$, and $0.515$, respectively. The latter constraint may be attractive from a power transfer point of view; the two former constraints admit code rates of $4/5$ and $3/4$, respectively, but may be hard to implement. For the $\mbox{RLL}(\{1\},\{1,2,4\})$ constraint, a rate-$1/2$, $6$-state encoder can be designed using the state-splitting algorithm from \cite{adl83}. Finally, we remark that the FER on the discretized Gaussian shift channel  is on the order  of ${\cal O}(p(2))$ and ${\cal O}(p(3/2))$ with the quantization schemes $\mathcal{Q}_{\rm rounding}$ and $\mathcal{Q}(\{1,2,4\})$, respectively, for small values of $\varepsilon$ for these codes.

\subsection{Related Constraints} \label{sec:relatedconstraints}
Any $\mbox{RLL}(\{3^i: i= 0, \ldots,L\},\{3^i: i= 0, \ldots,L\})$-limited code, for any positive integer $L$,  has a  FER of the order of ${\cal O}(p(1))$ (with both quantization schemes $\mathcal{Q}_{\rm rounding}$ and $\mathcal{Q}(\{3^i: i= 0, \ldots,L\})$) for small $\varepsilon$.  This can be shown with a similar argument to that used to prove Theorem~\ref{thm:RLL1}. We remark here that the ${\cal O}(p(1))$ performance guarantee under the quantization scheme $\mathcal{Q}_{\rm rounding}$ assumes that the decoder deals with nonadmissible (with respect to the code) observed runlengths  in the appropriate way. Notice that the capacity seems to approach a limit at about $0.58$ as $L$ increases. Thus, there seems to be no immediate practical advantage on extending these ideas further.

\section{Simulation Results} \label{sec:sim}

\ifonecolumn
\begin{figure}[!t]
\centerline{\includegraphics[width=0.5\columnwidth,keepaspectratio=true,angle=0]{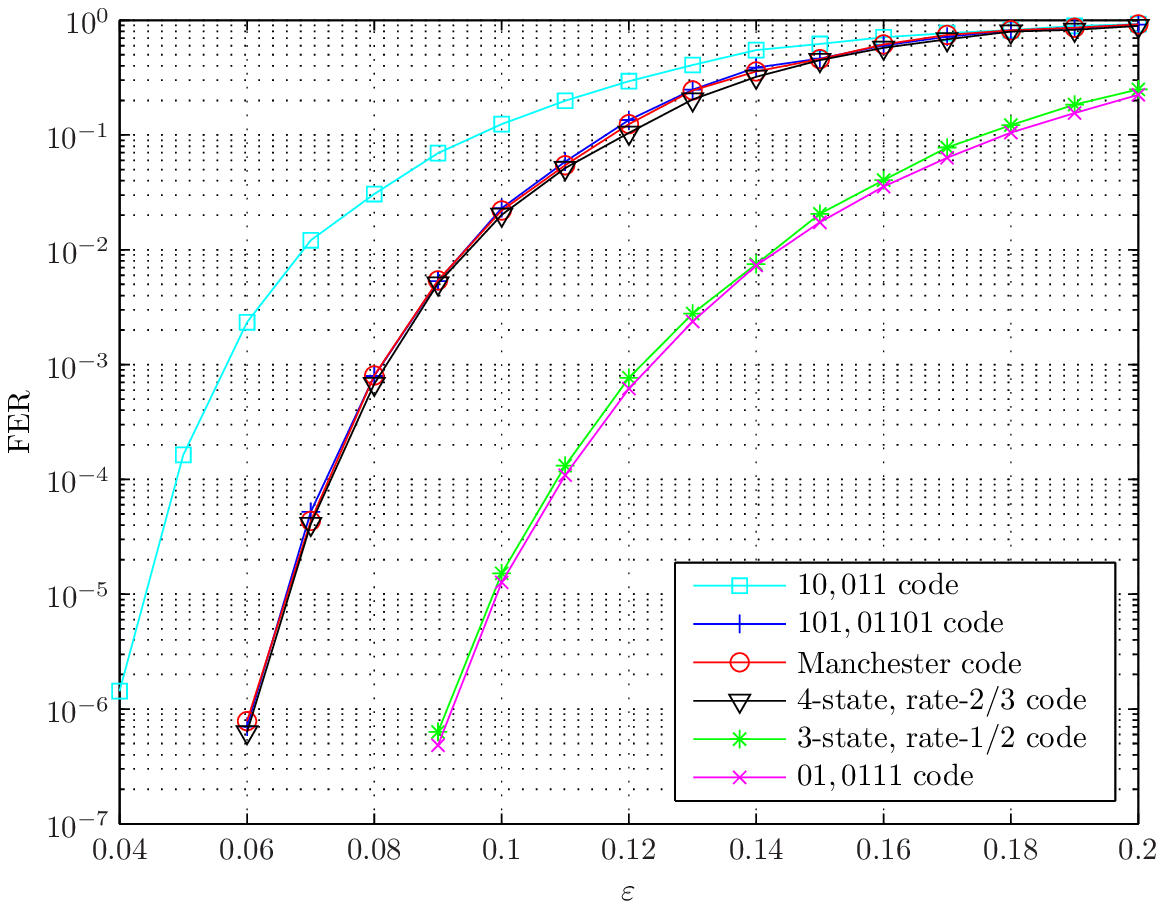}}
\caption{FER on the discretized Gaussian shift channel as a function of $\varepsilon$ for different codes.}
\label{fig:fer}
\end{figure}
\else
\begin{figure}[!t]
\centerline{\includegraphics[width=\columnwidth,keepaspectratio=true,angle=0]{Figure8.eps}}
\caption{FER on the discretized Gaussian shift channel as a function of $\varepsilon$ for different codes.}
\label{fig:fer}
\end{figure}
\fi

\ifonecolumn
\begin{figure}[!t]
\centerline{\includegraphics[width=0.5\columnwidth,keepaspectratio=true,angle=0]{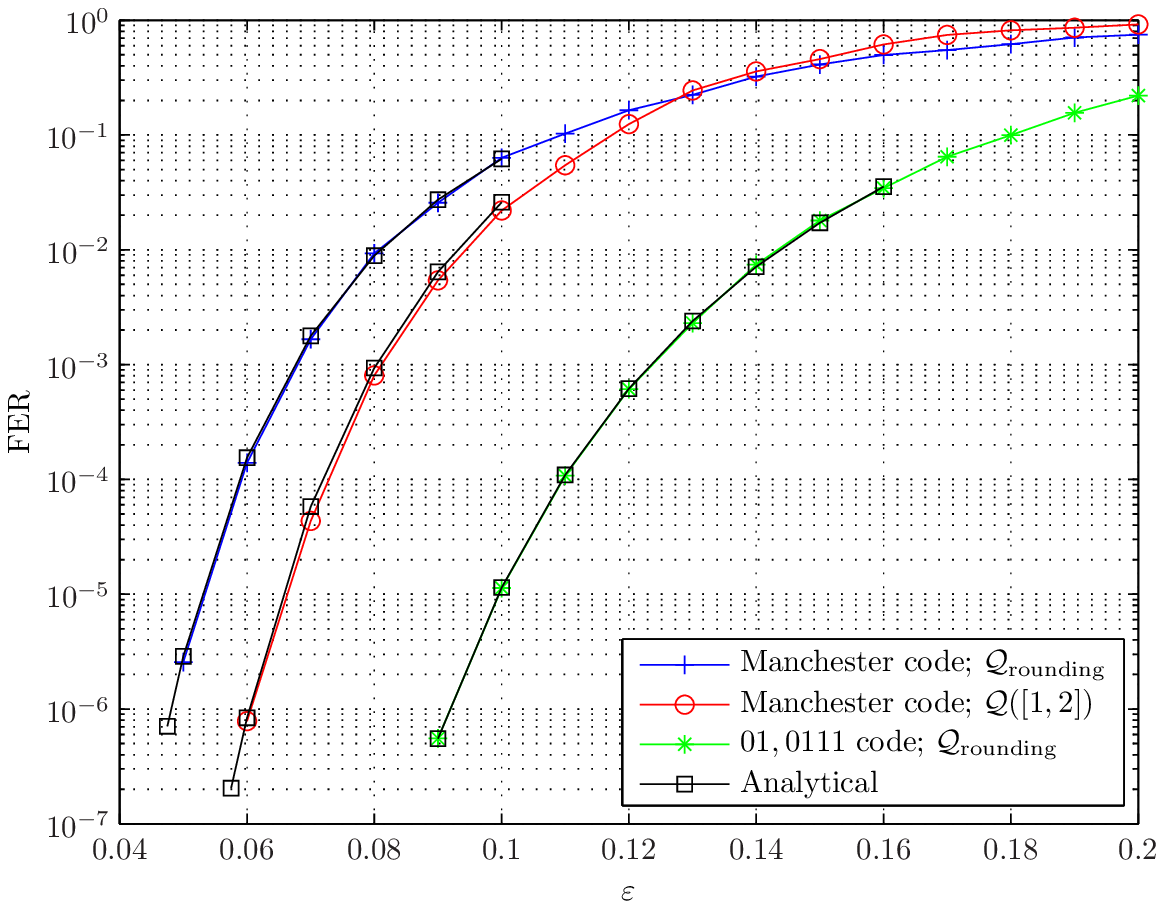}}
\caption{A comparison of the FER (as a function of $\varepsilon$) on the discretized Gaussian shift channel with two different quantization schemes and with  analytical expressions for the asymptotic performance, for the Manchester code and the variable-length code $\{01,0111\}$.}
\label{fig:fer_compare}
\end{figure}
\else
\begin{figure}[!t]
\centerline{\includegraphics[width=\columnwidth,keepaspectratio=true,angle=0]{Figure9.eps}}
\caption{A comparison of the FER (as a function of $\varepsilon$) on the discretized Gaussian shift channel with two different quantization schemes and with  analytical expressions for the asymptotic performance, for the Manchester code and the variable-length code $\{01,0111\}$.}
\label{fig:fer_compare}
\end{figure}
\fi

In this section, we provide some simulation results of some of the above-mentioned codes on the discretized Gaussian shift channel. In particular, we consider the Manchester code from Section~\ref{sec:Manchester}, the $\{10,011\}$ code from Section~\ref{sec:10_011}, and the $\{101,01101\}$ code from Section~\ref{sec:101_01101}, in addition to the newly designed codes from Examples~\ref{ex:1}, \ref{ex:2}, and \ref{ex:3}. For convenience, the information block length $k$  is chosen to be $40$ bits; this allows a simple processing, while the block length is already long enough for the issues addressed in our analytical approach to be valid. The simulation was carried out with software implemented in C++, and the simulation was terminated (for each simulation point) after the transmission of $10^8$ frames or when at least $200$ frame errors were recorded.  This should give sufficient statistical significance, as can also be seen from the figures (all simulation curves are smooth), and hence no error bars are included.

In Fig.~\ref{fig:fer}, we have plotted the FER performance of these codes as function of $\varepsilon$ on the discretized Gaussian shift channel with quantization scheme $\mathcal{Q}([1,2])$ for the Manchester code, for the $\{101,01101\}$ code from Section~\ref{sec:101_01101}, and for the code from  Example~\ref{ex:1}, with quantization scheme  $\mathcal{Q}([1,3])$ for the $\{10,011\}$ code from Section~\ref{sec:10_011}, and with quantization scheme  $\mathcal{Q}(\{1,3\})$ for the remaining codes. As can be observed from the figure, the $\mbox{RLL}(\{1,3\},\{1,3\}$)-constrained code from  Example~\ref{ex:2} and the $\mbox{RLL}(\{1\},\{1,3\}$)-constrained code from Example~\ref{ex:3}  have the best error rate performance, while the variable-length code  $\{10,011\}$ designed in \cite{ros11,ros09glo} for the traditional bit-shift channel has the worst performance among the simulated codes. However, this is not surprising, since this code has not been designed for the discretized Gaussian shift channel.

In Fig.~\ref{fig:fer_compare}, we have compared the performance of two different codes, namely the $\mbox{RLL}(\{1,2\},\{1,2\}$)-constrained Manchester code and the $\mbox{RLL}(\{1\},\{1,3\}$)-constrained code $\{01,0111\}$ from Example~\ref{ex:3} with two different quantization schemes. We have used the quantization schemes simulated in Fig.~\ref{fig:fer} and the quantization scheme $\mathcal{Q}_{\rm rounding}$. Note that the curve for the $\mbox{RLL}(\{1\},\{1,3\}$)-constrained code $\{01,0111\}$ from Example~\ref{ex:3} with quantization scheme $\mathcal{Q}(\{1,3\})$ is not explicitly shown, since there is no noticeable performance difference between the two quantization schemes for this code (as also shown by Proposition~\ref{prop:code01_0111} below). 
On the other hand, there is a significant performance difference for the other code. This is consistent with our earlier discussion in Section~\ref{sec:coding}. In the figure, we also show analytical expressions for the asymptotic performance which depend on both the quantization scheme used and the particular decoding algorithm.  These expressions are given by the propositions below and match perfectly with the simulation results.

\begin{proposition} \label{prop:code01_0111}
For the $\mbox{RLL}(\{1\},\{1,3\})$-constrained code from Example~\ref{ex:3}, the FER (with both quantization schemes $\mathcal{Q}(\{1,3\})$ and $\mathcal{Q}_{\rm rounding}$) on an information block of length $k$ is approximately
\begin{displaymath}
k \cdot p(1) = k \cdot Q\left(\frac{1}{2\varepsilon} \right)
\end{displaymath}
as $\varepsilon$ becomes smaller.
\end{proposition}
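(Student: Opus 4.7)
The plan is to reduce the frame-error event to a union of $k$ independent single-run misclassification events, one per codeword. Because every codeword in $\{01,0111\}$ starts with $0$ and ends with $1$, a concatenation of $k$ codewords produces a binary sequence containing exactly $k$ runs of $0$s (all of length $1$, by the $\mbox{RLL}(\{1\},\{1,3\})$ constraint) and exactly $k$ runs of $1$s (of length $1$ or $3$). Since the discretized Gaussian shift channel never deletes a run and preserves the alternation of symbol labels, the received sequence again contains $k$ runs of each symbol. I would invoke the natural decoder that uses each received $0$-run purely as a codeword boundary (ignoring its length entirely) and decodes each received $1$-run by a single threshold into length $1$ (emit the bit for codeword $01$) or length $3$ (emit the bit for codeword $0111$); under this decoder the frame is correct iff every one of the $k$ runs of $1$s is classified correctly.

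Next I would compute the per-run misclassification probability $q$ for each quantization scheme by direct appeal to Proposition~\ref{prop:1}. Under $\mathcal{Q}(\{1,3\})$, taking $(a,b)=(1,3)$ yields the optimal threshold $t=2ab/(a+b)=3/2$ and decision error $Q((b-a)/((a+b)\varepsilon))=Q(1/(2\varepsilon))=p(1)$ in each direction, so $q=p(1)$ exactly. Under $\mathcal{Q}_{\rm rounding}$, the natural decoder maps observed length $1$ to decoded length $1$ and observed length $\geq 2$ to decoded length $3$; a length-$1$ run is then misclassified iff $K\geq 3/2$, with probability $Q(1/(2\varepsilon))=p(1)$, and a length-$3$ run iff $3K<3/2$, i.e., $K<1/2$, again with probability $Q(1/(2\varepsilon))=p(1)$. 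Hence $q=p(1)$ in both cases to leading order in $\varepsilon$.

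Finally I would invoke the independence of $K$ across distinct runs, granted by the channel model, to write $P(\mathrm{FE})=1-(1-q)^{k}=k\cdot p(1)+O(p(1)^{2})$, which matches the stated approximation. All other per-run error events---e.g., a length-$1$ run observed as length $\geq 3$ (probability $Q(3/(2\varepsilon))$) or a length-$3$ run observed as length $\geq 5$---are of strictly smaller order than $p(1)$ and disappear into the $O(p(1)^{2})$ term.

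The hard part will be the bookkeeping under $\mathcal{Q}_{\rm rounding}$: the single-shift probability $p(3)=Q(1/(6\varepsilon))$ is much larger than $p(1)$ and corresponds to a length-$3$ run being received as observed length $2$ or $4$, but the chosen decoder still recovers the correct codeword in exactly those cases, so they contribute \emph{nothing} to the FER. One has to verify that no other surviving error event has probability intermediate between $p(1)$ and $p(3)$; the key fact is that $\mathcal{Q}_{\rm rounding}$ rounds to the nearest positive integer, so a length-$3$ run is observed as $1$ precisely when $K<1/2$, an event whose probability is exactly $Q(1/(2\varepsilon))=p(1)$.
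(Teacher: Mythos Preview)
Your proposal is correct and follows essentially the same route as the paper: the paper's proof simply invokes the argument of Theorem~\ref{thm:RLL1} (that a run of length $1$ is misread iff $K\geq 3/2$ and a run of length $3$ iff $K<1/2$, both with probability $p(1)$) and then observes that the decoder makes exactly one such decision per information symbol, yielding the factor $k$. Your version is more explicit---spelling out why the $0$-runs serve only as codeword delimiters and carefully accounting for the harmless $p(3)$-events under $\mathcal{Q}_{\rm rounding}$---but the underlying decomposition and per-run computation are identical.
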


\begin{IEEEproof}
The $p(\cdot)$-part of the expression follows from Theorem~\ref{thm:RLL1} (or more precisely, the proof of Theorem~\ref{thm:RLL1}, since the proof holds for both $\mbox{RLL}(\{1\},\{1,3\})$-constrained and $\mbox{RLL}(\{1,3\},\{1,3\})$-constrained codes). The factor $k$ in front follows from the fact that the decoder needs to make exactly one decision for each information symbol.
\end{IEEEproof}

\begin{proposition}
For the Manchester code on an information block of length $k$, the FER is approximately
\begin{displaymath}
\left(3k/2+1/2 \right) \cdot p(3/2) = \left(3k/2+1/2 \right) \cdot Q\left(\frac{1}{3\varepsilon} \right)
\end{displaymath}
and
\begin{displaymath}
k/4 \cdot p(2) = k/4 \cdot Q\left(\frac{1}{4\varepsilon} \right)
\end{displaymath}
for large $k$,
with quantization schemes $\mathcal{Q}([1,2])$ and $\mathcal{Q}_{\rm rounding}$, respectively, as $\varepsilon$ becomes smaller.
\end{proposition}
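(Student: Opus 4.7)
The plan is to enumerate, for each quantization scheme, the dominant single-shift error events, evaluate their per-event probabilities from Proposition~\ref{prop:1} (or from (\ref{eq:tran_prob_rounding})), count their expected number using the combinatorial structure of the Manchester code, and apply a union bound valid in the small-$\varepsilon$ regime. As a preliminary I would establish the runlength profile: writing the encoded frame as $u_1 \bar{u}_1 u_2 \bar{u}_2 \cdots u_k \bar{u}_k$, each codeword contains an intra-pair transition, and the inter-pair boundary between $u_j \bar{u}_j$ and $u_{j+1} \bar{u}_{j+1}$ is a transition iff $u_j = u_{j+1}$; otherwise the two boundary bits merge into a length-$2$ run. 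Under a uniformly random information block this yields expected values of $3k/2 + 1/2$ for the total number of runs and $(k-1)/2$ for the number of length-$2$ runs, and all runs have length $1$ or $2$.

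For $\mathcal{Q}([1,2])$, Proposition~\ref{prop:1} with $a=1$, $b=2$ pinpoints the optimal threshold $t=4/3$ and gives the symmetric per-run misidentification probability $Q(1/(3\varepsilon)) = p(3/2)$. A single misidentified run changes the total received bit count by $\pm 1$, so no Manchester codeword of length $2k$ fits and every such event is a frame error. A union bound on the $3k/2 + 1/2$ expected runs, discarding the $O(p(3/2)^2)$ higher-order terms, delivers the first claimed expression.

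For $\mathcal{Q}_{\rm rounding}$ the significant single-shift events identified by (\ref{eq:tran_prob_rounding}) are $\tilde{x}=2 \to \omega = \pm 1$ of probability $p(2) = Q(1/(4\varepsilon))$ each and $\tilde{x}=1 \to \omega = 1$ of probability $p(1) \ll p(2)$. A received length-$3$ run violates the $\mbox{RLL}(\{1,2\},\{1,2\})$ constraint and is unambiguously reverted to length $2$, so the extension events produce no error. What remains are the length-$2 \to 1$ events, and using the characterization that a Manchester runlength sequence has length-$2$ runs exactly at positions $p$ with $\sum_{i<p} r_i$ odd, an ML decoder applied to the shortened received sequence faces multiple admissible single-edit reconstructions. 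Averaging over uniformly random information and over which length-$2$ run was shortened, I would show that each such event contributes an effective error probability of $\frac{1}{2} p(2)$; multiplied by the expected $(k-1)/2$ length-$2$ runs, this yields $(k/4)\,p(2)$ asymptotically.

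The main obstacle is this averaging step in the $\mathcal{Q}_{\rm rounding}$ case: rigorously establishing the ``$\frac{1}{2} p(2)$ per length-$2$ run'' figure requires counting the admissible length-$1 \to 2$ promotions in a randomly shortened Manchester runlength sequence, weighting each shift by $1 - 1/C$ where $C$ is the number of candidate reconstructions, and summing over the position of the shortened run. I would carry this out by linearity of expectation on parity indicators or by induction on $k$, reducing the calculation to a short generating-function identity that produces the leading $k/4$ constant.
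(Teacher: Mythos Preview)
Your treatment of the $\mathcal{Q}([1,2])$ case is essentially the paper's argument: the runlength count $3k/2+1/2$ together with the symmetric per-run error $p(3/2)$ and a union bound is exactly what the paper does (invoking Theorem~\ref{thm:RLLnew} for the $p(\cdot)$-part).

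For $\mathcal{Q}_{\rm rounding}$ your route diverges from the paper, and the divergence matters. The constant $k/4$ in the statement is \emph{decoder-specific}: the paper is analysing the concrete look-up table decoder of Table~\ref{tab:Manch} (the one used in the simulations), not an ML decoder. With that decoder the argument is short and combinatorial: a length-$2\to 1$ shift causes a frame error precisely when the length-$2$ run is \emph{followed by a length-$1$ run} (the paper exhibits the two cases $\dots 10.01.01\dots$ versus $\dots 10.01.10\dots$ and checks them against the table). That pattern corresponds to $u_j\neq u_{j+1}$ and $u_{j+1}=u_{j+2}$, which under uniform information occurs with probability $1/4$ per index, giving the prefactor $k/4$ directly. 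There is no ambiguity counting, no $1-1/C$ weighting, no generating-function identity.

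Your ML-based plan would, even if carried through, prove a slightly different statement (FER under optimal decoding rather than under the table decoder), and you yourself flag the averaging step as the main obstacle. I would replace that entire programme with the paper's device: fix the explicit sliding-window decoder, classify length-$2$ runs by whether the \emph{next} run has length $1$ or $2$, and observe that only the former case is uncorrectable. This collapses the ``obstacle'' to a one-line probability computation.
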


\begin{IEEEproof}
The $p(\cdot)$-parts of the expressions follow from Theorems~\ref{thm:RLLnew} and \ref{thm:RLL}, respectively. The factor in front (for quantization scheme $\mathcal{Q}([1,2])$) follows from the fact that all runlengths are equally critical, i.e., the probability of a length-$1$ runlength (of zeros or ones) being received as a length-$2$ runlength and vice versa is the same. Thus, the multiplicity in front of the $p(\cdot)$-part will be the average number of runlengths in a codeword, which is exactly $3k/2+1/2$.

With the second quantization scheme ($\mathcal{Q}_{\rm rounding}$), only length-$2$ runlengths are critical, but not all of them, as can be seen from the operation of the decoder. In particular, the decoder works as shown in Table~\ref{tab:Manch}. Observe a window of two consecutive bits of the received sequence (second column of Table~\ref{tab:Manch}). Then, based on the previously decoded bit (first column of Table~\ref{tab:Manch}), decode the observed bits as indicated in the third column of Table~\ref{tab:Manch}, and advance a number of bits (as indicated in the fourth column) for the new window.
\begin{table}[!t]
\scriptsize \centering \caption{Look-up table decoding of the Manchester code.}\label{tab:Manch}
\def\Hline{\noalign{\hrule height 2\arrayrulewidth}}
\vskip -1.5ex 
\begin{tabular}{c|c|c|c}
\Hline \\ [-2.0ex]
Previously decoded bit & Next bit pair & Decode to & Advance \\
\hline
\\ [-2.0ex] \hline  \\ [-2.0ex]

$1$ \T\B  & $10$ & $1$  & $2$  \\ \hline
$1$ \T\B  & $01$ & $0$  & $2$  \\ \hline
$1$ \T\B  & $11$ & $0$  & $1$  \\ \hline
$1$ \T\B  & $00$ & Whatever  & $3$  \\ \hline
$0$ \T\B  & $01$ & $0$  & $2$  \\ \hline
$0$ \T\B  & $10$ & $1$  & $2$  \\ \hline
$0$ \T\B  & $00$ & $1$  & $1$  \\ \hline
$0$ \T\B  & $11$ & Whatever  & $3$  \\ \hline
\end{tabular}
\end{table}
%
%
%

Now, for instance, the sequence $\dots 10.01.01 \dots$ is critical, since it can be received as $\dots 10.10.1 \dots$ (the third bit is deleted), and the decoder from Table~\ref{tab:Manch} is not able to recover it without errors. On the other hand, the sequence $\dots 10.01.10 \dots$ is \emph{not} critical, since the corresponding received sequence $\dots 10.11.0 \dots$ (again the third bit is deleted) is decoded correctly by the decoder of Table~\ref{tab:Manch}. In summary, from Table~\ref{tab:Manch}, the critical sequences are in fact exactly those that have a length-$2$ runlength followed by a length-$1$ runlength. Assuming equally likely transmitted information symbols, with probability $1/4$, a critical pattern occurs in the transmitted sequence for each information symbol, and the expression (valid for large $k$) follows.
\end{IEEEproof}

We have also simulated the case where the communication takes place  on the discretized Gaussian shift channel with simultaneous additive white Gaussian noise (AWGN), and where the Gaussian noise is added (at the bit level) at the output of the discretized Gaussian shift channel. The results, shown in Fig.~\ref{fig:shftandAWGN}, are for the Manchester code with $\mathcal{Q}([1,2])$ 
and hard-decisions as a function of both $\varepsilon$ and the signal-to-noise ratio (SNR), defined as $(A_1-A_0)^2 /  2 R \sigma^2$, where $\sigma$ is the standard deviation  of the AWGN, $R$ is the code rate, and $A_1$ (resp.\ $A_0$) is  the amplitude level of a one (resp.\ zero). For the other simulated codes, we have observed a similarly shaped performance behavior with simultaneous AWGN  (results not included here).  We remark that in the normal mode of RFID reader-to-tag operation, the SNR can be expected to be high.

\ifonecolumn
\begin{figure}[!htp]
\centerline{\includegraphics[width=0.5\columnwidth,keepaspectratio=true,angle=0]{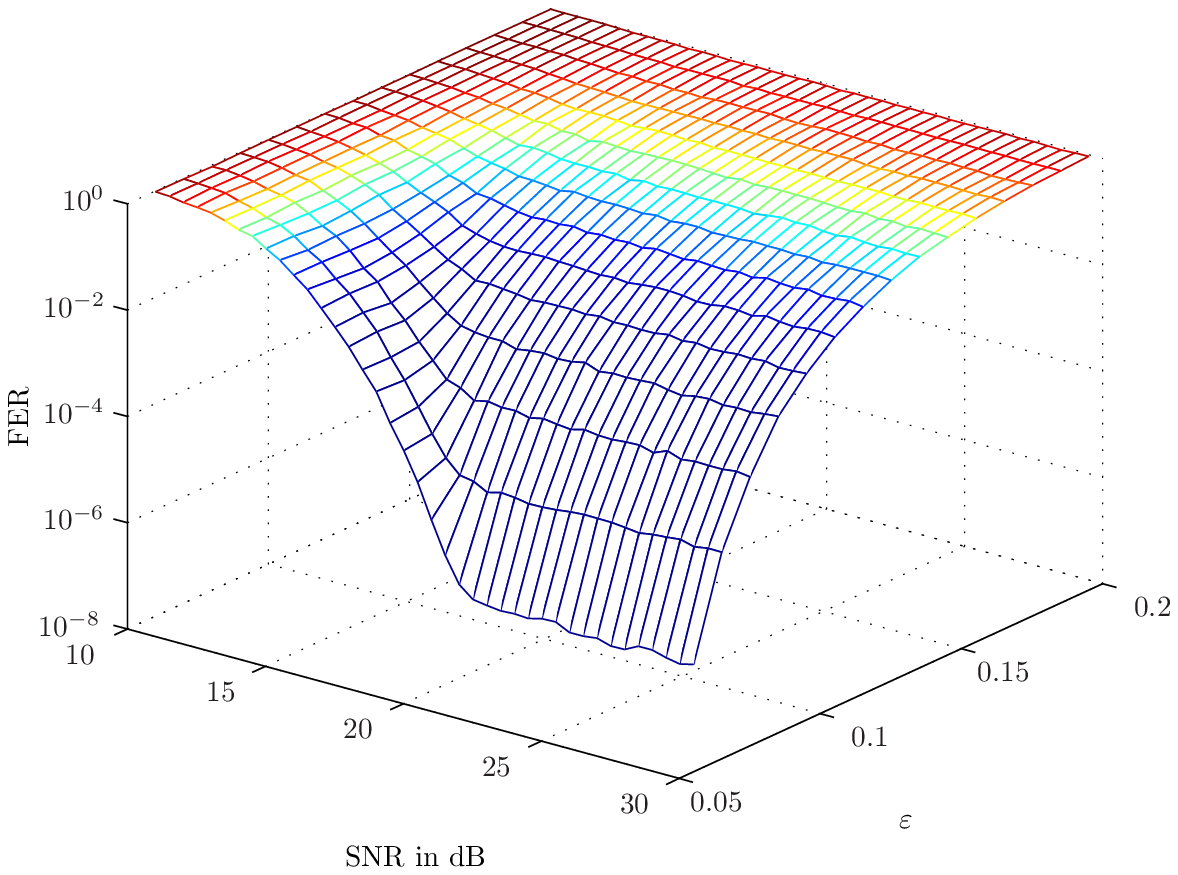}}
\caption{FER on the discretized Gaussian shift channel with simultaneous AWGN as a function of both $\varepsilon$ and the SNR  (in dB) for the Manchester code with $\mathcal{Q}([1,2])$.} 
\label{fig:shftandAWGN}
\end{figure}
\else
\begin{figure}[!htp]
\centerline{\includegraphics[width=\columnwidth,keepaspectratio=true,angle=0]{Figure10.eps}}
\caption{FER on the discretized Gaussian shift channel with simultaneous AWGN as a function of both $\varepsilon$ and the SNR (in dB) for the Manchester code with $\mathcal{Q}([1,2])$.} 
\label{fig:shftandAWGN}
\end{figure}
\fi

Finally, we remark that we have used look-up table decoding in all simulations. For instance, for the codes from  Examples~\ref{ex:1} and \ref{ex:2}, we have used Tables~\ref{tab:1} and \ref{tab:2}, respectively, in the decoding. For the Manchester code, we have used Table~\ref{tab:Manch}.  Further, note that all the codes used in the simulation are local and even a (hard-decision) ML decoder is limited in performance by the issues discussed in our analysis. A soft-decision ML  decoder may improve on this, but will complicate the implementation, something which is undesirable with current technology. A key point of the proposed codes is that they are designed for \emph{error avoidence}, and consequently coding gain is achieved with a very simple decoding procedure.

\section{Conclusion}
\label{sec:conclusion}

In this work, we have discussed a new channel model and code design for near-field passive RFID communication using inductive coupling as a power transfer mechanism. The (discretized) Gaussian shift channel was proposed as a channel model for the reader-to-tag channel when the receiver resynchronizes its internal clock each time a bit is detected. Furthermore, the capacity of this channel was considered, and some new simple codes for error avoidance were presented. Their performance were also compared  to the Manchester code and two previously proposed codes for the bit-shift channel model.

Error avoidance allows a quantification of the coding  gain of a runlength-limited code, and we believe that this quantification adds a new perspective of constrained codes. 

\section*{Acknowledgment}

The authors would like to thank the anonymous reviewers for their valuable comments and suggestions to improve the presentation of the paper.

\balance

\end{document}